\documentclass[12pt]{article}

\usepackage[hmargin=1.3in,vmargin=1.3in]{geometry}

\usepackage{amsfonts,amssymb,amsmath,amsthm,bm}
\usepackage{color}


\newtheorem{theorem}{Theorem}

\newtheorem{remark}{Remark}
\newtheorem{corollary}{Corollary}
\newtheorem{lemma}{Lemma}
\newtheorem{proposition}{Proposition}
\newtheorem{conjecture}{Conjecture}

\begin{document}

\title{On deep holes of Gabidulin codes }
\author{\small  Weijun Fang$^{a}$\ \ Li-Ping Wang$^{b,c,}$\thanks{Corresponding Author}\ \ Daqing Wan$^d$\\
\small $^a$  Chern Institute of Mathematics and LPMC, Nankai University, Tianjin 300071, China\\
\small Email: nankaifwj@163.com\\
\small $^b$ Institute of Information Engineering, Chinese Academy of Sciences,
Beijing 100093, China\\
 \small $^{c}$ University of Chinese Academy of Sciences,  Beijing, China \\
\small Email: wangliping@iie.ac.cn\\
\small $^d$ Department of Mathematics,
University of California, Irvine
CA 92697, USA\\ \small Email: dwan@math.uci.edu}
\date{}
\maketitle
\thispagestyle{empty}

\begin{abstract}
 In this paper,  we study  deep holes of  Gabidulin codes in both rank  and Hamming metrics.   Specifically, first,  we give a tight  lower bound for the distance of any word  to  a Gabidulin code  and  a sufficient and necessary condition for achieving this lower bound as well.  Then,  a class of deep holes of   a Gabidulin code  are  discovered.  Furthermore, we obtain some other deep holes for certain   Gabidulin codes.
\end{abstract}

\small\textbf{Keywords:} Gabidulin  codes, rank metric, deep holes, covering radius

\maketitle

\section{Introduction}

 Let  $\mathbb{F}^{n}_{q^m}$ be an $n$-dimensional vector space over a finite field
$\mathbb{F}_{q^m}$ where  $q$ is a prime power,  and  $n, m$ are positive  integers.  In this paper we only consider the case when $n\leq m$.  Let
$\mathbf{\beta}=(\beta_1,\ldots,\beta_{m})$ be a basis of $\mathbb{F}_{q^m}$ over
$\mathbb{F}_q$.  Let $\mathcal{F}_i$ be the map from $\mathbb{F}_{q^m}$ to $\mathbb{F}_q$ where $\mathcal{F}_i(u)$ is the $i$-th coordinate of  an element  $u\in \mathbb{F}_{q^m}$ in the basis representation with $\mathbf{\beta}$.   To any $\mathbf{u}=(u_1,\ldots, u_n)$ in $\mathbb{F}^{n}_{q^m}$,
we may associate the matrix $\bar{\mathbf{u}}=(\bar{u}_{i, j})_{1\leq i\leq m, 1\leq j\leq n}\in \mathcal{M}_{m,n}(\mathbb{F}_q)$ in which $\bar{u}_{i,j}=\mathcal{F}_i(u_j)$.  The rank weight of  the vector $\mathbf{u}$ can be defined by the rank of the associated matrix $\bar{\mathbf{u}}$, denoted by $w_{R}(\mathbf{u})$.  Thus, we can define the rank distance between two vectors $\mathbf{u}$ and $\mathbf{v}$ in $\mathbb{F}^{n}_{q^m}$ as $d_{R}(\mathbf{u},\mathbf{v})=w_{R}(\mathbf{u}-\mathbf{v})$.
 We refer to \cite{Loid}
for more details on codes for the rank distance.

For integers $1\leq k\leq n$, a linear rank-metric code $C$ of length $n$ and dimension $k$ over $\mathbb{F}_{q^m}$  is a subspace
of dimension $k$  of $\mathbb{F}^{n}_{q^m}$ embedded with the  rank metric.   The minimum rank distance  of the code $C$, denoted by $d_{R}(C)$,   is the minimum rank weight of the non-zero codewords in $C$. A linear rank-metric code $C$ of length $n$ and dimension $k$ over $\mathbb{F}_{q^m}$ is called a maximum rank distance (MRD) code if $d_{R}(C)=n-k+1$.
A $k\times n$  matrix is called a generator matrix of $C$ if its rows span the code.

 The rank distance of any word $\mathbf{u}\in \mathbb{F}^{n}_{q^m}$ to $C$ is defined as
   \[d_{R}(\mathbf{u}, C)=\min\{d_R(\mathbf{u}, \mathbf{c})\mid \mathbf{c}\in C\}.\]  It plays an important role in  decoding of rank-metric codes.  The maximum rank distance
\[\rho_R(C)=\max\{d_{R}(\mathbf{u}, C) \mid \mathbf{u}\in \mathbb{F}^{n}_{q^m}\}\]
is called the covering radius of $C$.  If the rank distance from a word to the code $C$ achieves the covering radius of the code, the word is called  a deep hole of the code $C$.

 The covering radius and deep holes of a linear code  embedded with  Hamming metric were studied extensively \cite{Bartoli, Qi, QiWan,Cohen85, Cohen86,graham, Tor, Keti, Liao, Wan2008, Wu, Zhang2013, ZW2016, ZW2017,  Zhuang}, in which MDS codes such as generalized Reed-Solomon codes, standard Reed-Solomon codes and projective Reed-Solomon codes were explored deeply.
  Gabidulin codes were   introduced by  Gabidulin in \cite{gabidulin} and independently by Delsarte in \cite{del}. Gabidulin codes can be seen as the $q$-analog of Reed-Solomon codes. Furthermore,
 Gabidulin codes are MRD codes.  Over the last decade there has been increased interest in Gabidulin codes, mainly because of their relevance to network coding \cite{Kotter, Silva}.  The covering radius for a Gabidulin code  was also  studied in \cite{Yan1, Yan2, Vasa}.
 However, little is known about  deep holes for such a code. In this paper, we give a tight lower bound for the distance of any word to a Gabidulin code in both rank  and Hamming metrics,  and a sufficient and necessary condition for attaining this lower bound  as well.  Then,  a class of deep holes of a   Gabidulin code  are discovered.  Furthermore,  we study the distance of a special class of words to  a Gabidulin code  and so obtain some other deep holes for certain   Gabidulin codes. Note that we refer to  rank metric if  Hamming metric is not explicitly pointed out in this paper.

The rest of this paper is organized as follows. In Section 2, we introduce some basic notations and results about linearized polynomials.  Section 3 provides a class of deep holes for a  Gabidulin code in both rank and Hamming metrics.  Next,  we obtain some other deep holes for certain   Gabidulin codes in Section 4.  Finally, we give  our conclusions  in Section 5.

\section{Linearized polynomials}

Gabidulin codes exploit linearized polynomials instead of arbitrary polynomials and so we recall some results about linearized polynomials.

A $q$-linearized polynomial over $\mathbb{F}_{q^m}$ is defined to be a polynomial of the form
\[L(x)=\sum_{i=0}^{d}a_i x^{q^i}, a_i\in\mathbb{F}_{q^m},  a_d\neq 0\]
where $d$ is called the $q$-degree of $f(x)$, denoted by $\deg_q(f(x))$.  Note that $L(x)$ has no constant term.
One can easily check that $L(x_1+x_2)=L(x_1)+L(x_2)$ and $L(\lambda x_1)=\lambda L(x_1)$ for any
$x_1, x_2\in \mathbb{F}_{q^m}$ and $\lambda\in \mathbb{F}_q$, from  which the name stems.
In particular, $L(x)$ induces an $\mathbb{F}_q$-linear endomorphism of the $\mathbb{F}_q$-vector
space $\mathbb{F}_{q^m}$.
The set of all $q$-linearized polynomials over $\mathbb{F}_{q^m}$ is denoted by  $\mathcal{L}_{q}(x,\mathbb{F}_{q^m})$. The ordinary product of linearized polynomials does not have to be a linearized polynomial. However, the composition $L_1(x)\circ L_2(x)=L_1(L_2(x))$ is also a linearized polynomial. The set $\mathcal{L}_{q}(x,\mathbb{F}_{q^m})$ forms a non-commutative ring under the operations of composition $\circ$ and ordinary addition. It is also an $\mathbb{F}_q$-algebra.

\medskip

\begin{lemma}\cite{Nied}
Let  $f(x)\in \mathcal{L}_{q}(x,\mathbb{F}_{q^m})$ and $\mathbb{F}_{q^s}$  be the smallest extension field of $\mathbb{F}_{q^m}$ that contains all roots of $f(x)$. Then the set of all roots of $f(x)$ forms an $\mathbb{F}_q$-linear vector space in $\mathbb{F}_{q^s}$.
\end{lemma}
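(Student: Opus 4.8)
The plan is to realize the set of roots directly as the kernel of an $\mathbb{F}_q$-linear map. First I would note that, since $f(x)$ is an ordinary polynomial with coefficients in $\mathbb{F}_{q^m}$, it has a splitting field over $\mathbb{F}_{q^m}$; because the subfields of a given finite field are totally ordered by divisibility of degrees, this splitting field is a single finite extension $\mathbb{F}_{q^s}$ with $\mathbb{F}_q\subseteq\mathbb{F}_{q^m}\subseteq\mathbb{F}_{q^s}$, and by minimality it is exactly the field named in the statement. Write $V=\{\alpha\in\mathbb{F}_{q^s}: f(\alpha)=0\}$ for the set of all roots; it is nonempty since $f(0)=0$ (there is no constant term).

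Next I would check that the evaluation map $\alpha\mapsto f(\alpha)$ is an $\mathbb{F}_q$-linear endomorphism of the $\mathbb{F}_q$-vector space $\mathbb{F}_{q^s}$. This is the same computation recorded just before the statement, but now carried out over $\mathbb{F}_{q^s}$ instead of $\mathbb{F}_{q^m}$: for $\alpha,\beta\in\mathbb{F}_{q^s}$ the Frobenius identity $(\alpha+\beta)^{q^i}=\alpha^{q^i}+\beta^{q^i}$ yields $f(\alpha+\beta)=f(\alpha)+f(\beta)$, and for $\lambda\in\mathbb{F}_q$ one has $\lambda^{q^i}=\lambda$, hence $f(\lambda\alpha)=\sum_i a_i\lambda^{q^i}\alpha^{q^i}=\lambda f(\alpha)$. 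Note that this last step uses only $\lambda\in\mathbb{F}_q$ and is insensitive to the fact that the coefficients $a_i$ lie in the larger field $\mathbb{F}_{q^m}$. It follows that $V=\ker f$ is an $\mathbb{F}_q$-subspace of $\mathbb{F}_{q^s}$, which is precisely the assertion; it is automatically finite-dimensional since $\mathbb{F}_{q^s}$ is.

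I do not expect a real obstacle: the only point needing a word of care is the very first one --- that a single finite field $\mathbb{F}_{q^s}$ can serve simultaneously as an extension of $\mathbb{F}_{q^m}$ and as a splitting field of $f$ --- which is standard finite field theory. If desired, one can strengthen the conclusion with the dimension count $\dim_{\mathbb{F}_q}V\le \deg_q(f(x))$, with equality exactly when $f(x)$ is separable (equivalently $a_0\neq0$), by comparing $|V|$ with the degree $q^{\deg_q(f(x))}$ of $f$ as an ordinary polynomial; but this refinement is not required for the stated result.
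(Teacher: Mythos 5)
Your proof is correct; the paper itself offers no proof of this lemma (it is quoted from the reference \cite{Nied}), and your argument---extending the $\mathbb{F}_q$-linearity of the evaluation map $\alpha\mapsto f(\alpha)$ from $\mathbb{F}_{q^m}$ to the splitting field $\mathbb{F}_{q^s}$ and identifying the root set with the kernel of this endomorphism---is precisely the standard argument behind the citation and the linearity remark the paper makes just before stating the lemma. Your added refinement $\dim_{\mathbb{F}_q}V\le\deg_q(f)$, with equality exactly when $a_0\neq 0$, is also correct but not required.
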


\medskip
Let $U$ be an $\mathbb{F}_q$-linear subspace of $\mathbb{F}_{q^m}$. Then $\prod_{g\in U}(x-g)$ is called the $q$-annihilator polynomial of $U$.

\medskip
\begin{lemma}\cite{Nied}
Let $U$ be an $\mathbb{F}_q$-linear subspace of $\mathbb{F}_{q^m}$. Then $\prod_{g\in U}(x-g)$ is a $q$-linearized polynomial over $\mathbb{F}_{q^m}$.
\end{lemma}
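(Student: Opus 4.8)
The plan is to prove the lemma by induction on $t=\dim_{\mathbb{F}_q}U$. Throughout, write $P_W(x):=\prod_{g\in W}(x-g)$ for an $\mathbb{F}_q$-subspace $W$. The base case $t=0$ is immediate, since then $U=\{0\}$ and $P_U(x)=x$, which is $q$-linearized. For the inductive step, assume the statement for $(t-1)$-dimensional subspaces and let $\dim U=t\ge 1$. I would pick an $\mathbb{F}_q$-subspace $U'\subset U$ of codimension $1$ and an element $\alpha\in U\setminus U'$, and then use the coset decomposition $U=\bigcup_{c\in\mathbb{F}_q}(c\alpha+U')$ (a disjoint union) to obtain
\[
P_U(x)=\prod_{c\in\mathbb{F}_q}\prod_{h\in U'}\bigl(x-c\alpha-h\bigr)=\prod_{c\in\mathbb{F}_q}P_{U'}(x-c\alpha).
\]

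Next I would apply the induction hypothesis: $P_{U'}$ is $q$-linearized, so $P_{U'}(x-c\alpha)=P_{U'}(x)-c\,P_{U'}(\alpha)$ as an identity in $\mathbb{F}_{q^m}[x]$. Put $\gamma:=P_{U'}(\alpha)$; since $\alpha\notin U'$ we have $\gamma\ne 0$. Using the identity $\prod_{c\in\mathbb{F}_q}(T-c)=T^{q}-T$ with $T=P_{U'}(x)/\gamma$, a short calculation yields
\[
P_U(x)=\prod_{c\in\mathbb{F}_q}\bigl(P_{U'}(x)-c\gamma\bigr)=P_{U'}(x)^{q}-\gamma^{q-1}P_{U'}(x).
\]
The right-hand side is $q$-linearized: the $q$-th power of a $q$-linearized polynomial is again $q$-linearized (each exponent $q^i$ becomes $q^{i+1}$ and each coefficient is raised to the $q$-th power, using $(u+v)^{q}=u^{q}+v^{q}$), and $q$-linearized polynomials form an $\mathbb{F}_{q^m}$-subspace of $\mathbb{F}_{q^m}[x]$. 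This completes the induction.

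I do not anticipate a real obstacle; the one point deserving care is that $P_{U'}(x-c\alpha)=P_{U'}(x)-c\,P_{U'}(\alpha)$ must be a genuine polynomial identity rather than merely an identity of the induced endomorphism of $\mathbb{F}_{q^m}$ — but this is clear from the shape $\sum_i a_ix^{q^i}$ together with $(x+y)^{q^i}=x^{q^i}+y^{q^i}$ and $c^{q^i}=c$ for $c\in\mathbb{F}_q$. As a variant that avoids the induction, one could instead observe that $g\mapsto g-\alpha$ permutes $U$, hence $P_U(x+\alpha)=P_U(x)$ for every $\alpha\in U$; then the polynomial $P_U(x+y)-P_U(x)-P_U(y)$ has $y$-degree at most $q^{t}-1$ yet vanishes at all $q^{t}$ elements $y=\alpha\in U$, so it is identically zero, which gives additivity of $P_U$. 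Combining this with $P_U(\lambda x)=\lambda^{q^{t}}P_U(x)=\lambda P_U(x)$ for $\lambda\in\mathbb{F}_q$ (from $\lambda U=U$) and the standard fact that an additive, $\mathbb{F}_q$-homogeneous polynomial over a field of characteristic $p$ must have the form $\sum_i a_ix^{q^i}$, the claim follows.
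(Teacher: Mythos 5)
Your proof is correct, and it is worth noting that the paper itself gives no argument for this lemma: it is quoted from Lidl and Niederreiter \cite{Nied} without proof, so there is no in-paper proof to match. Your inductive coset argument checks out at every step: the factorization $\prod_{g\in U}(x-g)=\prod_{c\in\mathbb{F}_q}P_{U'}(x-c\alpha)$, the genuinely polynomial identity $P_{U'}(x-c\alpha)=P_{U'}(x)-cP_{U'}(\alpha)$ (valid because the coefficients sit on $q^i$-th powers and $c^{q^i}=c$), the nonvanishing of $\gamma=P_{U'}(\alpha)$, and the closing identity $P_U(x)=P_{U'}(x)^q-\gamma^{q-1}P_{U'}(x)$, whose right-hand side is $q$-linearized since $q$-th powers and $\mathbb{F}_{q^m}$-linear combinations of $q$-linearized polynomials are again $q$-linearized. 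The route in \cite{Nied}, which is the one the paper implicitly leans on (see the displayed formula $\prod_{g\in U}(x-g)=\lambda\det(M_{n+1}(g_1,\ldots,g_n,x))$ in Section 2 and the coefficient formulas quoted from Lemma 3.51 of \cite{Nied} in Section 4), goes through Moore determinants: expanding $\det(M_{n+1}(g_1,\ldots,g_n,x))$ along the column containing $x,x^q,\ldots,x^{q^n}$ exhibits the annihilator polynomial as a $q$-polynomial and simultaneously yields its coefficients as ratios of Moore determinants, which is exactly what the paper needs later. So your induction buys an elementary, self-contained proof of the bare statement, while the determinant approach buys the explicit coefficient data; your non-inductive variant (translation invariance plus the characterization of additive, $\mathbb{F}_q$-homogeneous polynomials as $\sum_i a_ix^{q^i}$) is also sound, though the standard fact it invokes is itself a theorem of comparable depth in \cite{Nied}.
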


\medskip

Let  $\beta_1,\ldots, \beta_n\in \mathbb{F}_{q^m}$ and  denote the $k\times n$
Moore matrix by
\[M_{k}(\beta_1,\ldots,\beta_n):=\left ( \begin{array}{cccc}
\beta_1 & \beta_2 & \ldots & \beta_{n} \\
\beta_{1}^{q} & \beta_{2}^{q} & \ldots & \beta_{n}^{q} \\
\vdots & \vdots & \ddots & \vdots \\
\beta_{1}^{q^{k-1}} & \beta_{2}^{q^{k-1}} & \ldots & \beta_{n}^{q^{k-1}}
\end{array}
\right ).
\]
Furthermore, if $g_1,\ldots, g_n$ is a basis of $U$, one can write
\[\prod_{g\in U}(x-g)=\lambda \det(M_{n+1}(g_1,\ldots,g_n,x))\]
for some non-zero constant $\lambda\in \mathbb{F}_{q^m}$.  Clearly, its $q$-degree is $n$.

In addition, we have the notion of $q$-Lagrange polynomials.

Let $\mathbf{g}=\{g_1,\ldots,g_n\}\subset \mathbb{F}_{q^m}$ and $\mathbf{r}=\{r_1,\ldots,r_n\}\subset \mathbb{F}_{q^m}$,  where $g_1,\ldots,g_n$ are $\mathbb{F}_{q}$-linearly independent.
For $1\leq i \leq n$, we define the matrix $\mathcal{D}_i(\mathbf{g},x)$ as $M_n(g_1,\ldots, g_n, x)$ without the $i$th column. The $q$-Lagrange
polynomial with respect to $\mathbf{g}$ and $\mathbf{r}$ is defined to be
\[\Lambda_{\mathbf{g},\mathbf{r}}(x)=\sum_{i=1}^{n}(-1)^{n-i}r_i \frac{\det(\mathcal{D}_i(\mathbf{g},x))}{\det(M_n(\mathbf{g}))}\in \mathbb{F}_{q^m}[x].   \]

\begin{proposition}\cite{Wachter}\label{prop: Lagrange}
 The $q$-Lagrange polynomial $\Lambda_{\mathbf{g},\mathbf{r}}(x)$ is a $q$-linearized polynomial in $\mathbb{F}_{q^m}[x]$ and $\Lambda_{\mathbf{g},\mathbf{r}}(g_i)=r_i$ for $i=1,\ldots,n$.
\end{proposition}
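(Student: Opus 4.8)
The plan is to prove both assertions directly from elementary properties of determinants, without invoking the finer structure of linearized polynomials. For the claim that $\Lambda_{\mathbf{g},\mathbf{r}}(x)\in\mathbb{F}_{q^m}[x]$, I would first recall the standard Moore-determinant criterion: since $g_1,\dots,g_n$ are $\mathbb{F}_q$-linearly independent, $\det(M_n(\mathbf{g}))$ is a \emph{nonzero} scalar in $\mathbb{F}_{q^m}$, so dividing by it is legitimate and keeps coefficients in $\mathbb{F}_{q^m}$; every entry of $\mathcal{D}_i(\mathbf{g},x)$ lies in $\mathbb{F}_{q^m}[x]$, hence so does each $\det(\mathcal{D}_i(\mathbf{g},x))$ and therefore $\Lambda_{\mathbf{g},\mathbf{r}}(x)$.

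To see that $\Lambda_{\mathbf{g},\mathbf{r}}(x)$ is $q$-linearized, I would expand $\det(\mathcal{D}_i(\mathbf{g},x))$ along its last column, which is $(x,x^q,\dots,x^{q^{n-1}})^{\mathsf T}$. Each of the corresponding cofactors is a constant in $\mathbb{F}_{q^m}$ (a minor formed only from powers of the $g_j$'s), so $\det(\mathcal{D}_i(\mathbf{g},x))=\sum_{j=0}^{n-1} c_{i,j}\,x^{q^{j}}$ for suitable $c_{i,j}\in\mathbb{F}_{q^m}$. Consequently each summand of $\Lambda_{\mathbf{g},\mathbf{r}}(x)$, and hence $\Lambda_{\mathbf{g},\mathbf{r}}(x)$ itself, is an $\mathbb{F}_{q^m}$-linear combination of $x^{q^{0}},\dots,x^{q^{n-1}}$, i.e.\ an element of $\mathcal{L}_q(x,\mathbb{F}_{q^m})$ of $q$-degree at most $n-1$.

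For the interpolation property, I would substitute $x=g_j$ into each determinant $\det(\mathcal{D}_i(\mathbf{g},x))$. When $j\neq i$, the column $(g_j,g_j^q,\dots,g_j^{q^{n-1}})^{\mathsf T}$ occurs twice in $\mathcal{D}_i(\mathbf{g},g_j)$ — once among the retained columns $g_1,\dots,\widehat{g_i},\dots,g_n$ and once as the last column — so that determinant vanishes. When $j=i$, the matrix $\mathcal{D}_i(\mathbf{g},g_i)$ is precisely $M_n(\mathbf{g})$ with its $i$-th column removed and reinserted at the end; moving a column past the $n-i$ columns to its right contributes the sign $(-1)^{n-i}$, so $\det(\mathcal{D}_i(\mathbf{g},g_i))=(-1)^{n-i}\det(M_n(\mathbf{g}))$. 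The defining sum for $\Lambda_{\mathbf{g},\mathbf{r}}(g_j)$ therefore collapses to the single term $i=j$, yielding $(-1)^{n-j}r_j\cdot(-1)^{n-j}\det(M_n(\mathbf{g}))/\det(M_n(\mathbf{g}))=r_j$.

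The only genuinely delicate point is the sign bookkeeping: one must verify that the sign $(-1)^{n-i}$ produced by relocating the deleted $g_i$-column cancels exactly against the factor $(-1)^{n-i}$ appearing in the definition of $\Lambda_{\mathbf{g},\mathbf{r}}$. Everything else is routine once the nonvanishing of the Moore determinant for $\mathbb{F}_q$-independent $g_i$'s is in hand.
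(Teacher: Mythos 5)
Your proof is correct: the nonvanishing of the Moore determinant for $\mathbb{F}_q$-independent $g_i$'s, the cofactor expansion along the column $(x,x^q,\ldots,x^{q^{n-1}})^{\mathsf T}$ showing $q$-linearity, the vanishing of $\det(\mathcal{D}_i(\mathbf{g},g_j))$ for $j\neq i$ by column repetition, and the sign cancellation $(-1)^{n-i}\cdot(-1)^{n-i}=1$ are all verified accurately. The paper itself offers no proof of this proposition, citing it from \cite{Wachter}; your argument is the standard direct verification of that cited result, so there is nothing to reconcile.
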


\begin{proposition}\cite{Kuijper} \label{prop:multi}
Let $L(x)\in \mathcal{L}_q(x, \mathbb{F}_{q^m})$ be such that $L(g_i)=0$ for all $i$. Then there exists an $H(x)\in \mathcal{L}_q(x, \mathbb{F}_{q^m})$ such that
$L(x)=H(x) \circ \prod_{g\in <\mathbf{g}>}(x-g)$, where  $<\mathbf{g}>$ is the $\mathbb{F}_q$-vector space spanned by $\mathbf{g}$. \end{proposition}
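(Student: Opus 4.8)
The plan is to run a right-division algorithm in the (non-commutative) ring $\mathcal{L}_q(x,\mathbb{F}_{q^m})$, dividing $L(x)$ by $N(x):=\prod_{g\in\langle\mathbf{g}\rangle}(x-g)$. By Lemma 2, $N(x)\in\mathcal{L}_q(x,\mathbb{F}_{q^m})$, and since $g_1,\dots,g_n$ are $\mathbb{F}_q$-linearly independent we have $|\langle\mathbf{g}\rangle|=q^n$, so $N$ is a product of $q^n$ distinct monic linear factors, hence monic of $q$-degree exactly $n$.

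First I would establish the division step: for every $L\in\mathcal{L}_q(x,\mathbb{F}_{q^m})$ there are $H,R\in\mathcal{L}_q(x,\mathbb{F}_{q^m})$ (with $R=0$ allowed) such that $L(x)=H(x)\circ N(x)+R(x)$ and $\deg_q R<n$. This follows by induction on $\deg_q L$: if $\deg_q L=d\geq n$ with leading coefficient $a_d$, then $a_d\,x^{q^{d-n}}\circ N(x)=a_d\,N(x)^{q^{d-n}}$ is linearized with leading term $a_d\,x^{q^{d}}$ (using that $N$ is monic of $q$-degree $n$), so $L(x)-a_d\,x^{q^{d-n}}\circ N(x)$ is again linearized with strictly smaller $q$-degree; iterating and collecting the subtracted terms $c_j\,x^{q^{j}}\circ N(x)$ produces $H(x)=\sum_j c_j x^{q^{j}}\in\mathcal{L}_q(x,\mathbb{F}_{q^m})$ and a remainder $R$ of $q$-degree at most $n-1$.

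Next I would evaluate at the $g_i$. Since $N(g_i)=0$ for all $i$ and $H$ has no constant term, $L(g_i)=H(N(g_i))+R(g_i)=R(g_i)$, and the hypothesis $L(g_i)=0$ gives $R(g_i)=0$ for every $i$. Because $R$ is $\mathbb{F}_q$-linear (being a linearized polynomial), it vanishes on the whole subspace $\langle\mathbf{g}\rangle$, which has $q^n$ elements. But a nonzero $q$-linearized polynomial of $q$-degree $\leq n-1$ has at most $q^{n-1}<q^n$ roots (indeed, as an ordinary polynomial of degree $q^{\deg_q R}$, or via Lemma 1 its roots form a subspace of dimension $\leq \deg_q R$). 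Hence $R=0$, so $L(x)=H(x)\circ N(x)$ with $H\in\mathcal{L}_q(x,\mathbb{F}_{q^m})$, as claimed.

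The only point requiring care is the division step — one must verify that subtracting $a_d\,x^{q^{d-n}}\circ N(x)$ exactly cancels the leading term (this uses the monicity of $N$) while keeping us inside $\mathcal{L}_q(x,\mathbb{F}_{q^m})$ (composition and difference of linearized polynomials are linearized), and that the accumulated quotient is itself linearized; the rest is routine. Alternatively, this step is immediate once one identifies $\mathcal{L}_q(x,\mathbb{F}_{q^m})$ with the skew polynomial ring $\mathbb{F}_{q^m}[t;\,u\mapsto u^{q}]$, which is right-Euclidean, and then applies right division by $N$ directly.
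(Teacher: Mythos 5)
Your proof is correct: the right-division step in $\mathcal{L}_q(x,\mathbb{F}_{q^m})$ by the monic polynomial $\prod_{g\in <\mathbf{g}>}(x-g)$ is carried out properly (monicity guarantees the leading-term cancellation), and the root-counting argument showing the remainder $R$ vanishes on all $q^n$ elements of $<\mathbf{g}>$ while having at most $q^{n-1}$ roots is sound. The paper itself gives no proof of this proposition (it is cited from the Horlemann-Trautmann--Kuijper reference), but your argument is exactly the standard $q$-division-algorithm reasoning that the paper invokes just after Proposition \ref{prop: map}, so it is fully consistent with the paper's toolkit.
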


\section{Deep holes of Gabidulin codes}

Let $g_1,\ldots, g_n\in \mathbb{F}_{q^m}$ be linearly independent over $\mathbb{F}_q$, which also implies that $n\leq m$.  Let $\mathbf{g}=\{g_1,\ldots,g_n\}$ and $<\mathbf{g}>$ is the $\mathbb{F}_q$-vector space spanned by $\mathbf{g}$.   A Gabidulin code $\mathcal{G}\subseteq \mathbb{F}_{q^m}^n$ is defined as a  linear block code with the generator matrix $M_k(g_1,\ldots,g_n)$, where $1\leq k\leq n$. Using the isomorphic matrix representation, we can interpret $\mathcal{G}$ as a matrix code in $\mathbb{F}_{q}^{m\times n}$.  The rank distance is defined in Section 1.

The Gabidulin code $\mathcal{G}$ with length $n$ has dimension $k$ over $\mathbb{F}_{q^m}$ and minimum rank distance $n-k+1$, and so $\mathcal{G}$ is an MRD code \cite{gabidulin}.  The Gabidulin code $\mathcal{G}$ can also be defined as follow:
\begin{eqnarray} \label{def}
\mathcal{G} &= & \{(m(g_1),\ldots, m(g_n))\in \mathbb{F}^{n}_{q^m}|\, m(x)\in \mathcal{L}_q(x, \mathbb{F}_{q^m})   \nonumber \\
 & & \mbox{ and } \deg_q(m(x))<k \}.
 \end{eqnarray}
Note that this interpretation of the code $\mathcal{G}$ will be used throughout the rest of the paper.
It is the $q$-analogue of the generalized Reed-Solomon code.

Let
$$(\prod_{g\in <\mathbf{g}>}(x-g))=\mathcal{L}_q(x,\mathbb{F}_{q^m})\circ \prod_{g\in <\mathbf{g}>}(x-g)$$
be the left ideal generated by the element $\prod_{g\in <\mathbf{g}>}(x-g)$ in the non-commutative ring
$\mathcal{L}_q(x,\mathbb{F}_{q^m})$ with respect to the composition product.
In particular,  $(\prod_{g\in <\mathbf{g}>}(x-g))$ is an $\mathbb{F}_q$-linear additive subgroup of   $\mathcal{L}_q(x,\mathbb{F}_{q^m})$.
It follows that $\mathcal{L}_q(x,\mathbb{F}_{q^m})/(\prod_{g\in <\mathbf{g}>}(x-g))$ is an $\mathbf{F}_q$-vector space.  Define an $\mathbb{F}_q$-linear evaluation map
$$\sigma: \mathcal{L}_q(x,\mathbb{F}_{q^m})/(\prod_{g\in <\mathbf{g}>}(x-g)) \longrightarrow \mathbb{F}^{n}_{q^m}$$
given  by
$$\sigma(f(x))=(f(g_1),\ldots, f(g_n)).$$
We have the following property.

 \medskip
 \begin{proposition} \label{prop: map}
 The  above defined map $\sigma$ is an $\mathbb{F}_q$-vector space isomorphism.
 \end{proposition}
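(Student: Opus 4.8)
The plan is to show that $\sigma$ is a well-defined $\mathbb{F}_q$-linear map between two $\mathbb{F}_q$-vector spaces, that it is injective, and that its source and target have the same finite dimension $n$ over $\mathbb{F}_{q^m}$ — or, more carefully, the same $\mathbb{F}_q$-dimension — so that injectivity forces bijectivity. Since linearity of the evaluation map is immediate from the fact that composition/addition of linearized polynomials evaluates pointwise, the real content is well-definedness together with injectivity, and a dimension count.

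First I would verify that $\sigma$ is well-defined on the quotient: if $f(x) \equiv f'(x)$ modulo the left ideal $(\prod_{g\in <\mathbf{g}>}(x-g))$, then $f(x)-f'(x) = H(x)\circ \prod_{g\in <\mathbf{g}>}(x-g)$ for some $H(x)\in \mathcal{L}_q(x,\mathbb{F}_{q^m})$, and since every $g_i$ is a root of $\prod_{g\in <\mathbf{g}>}(x-g)$ we get $f(g_i)-f'(g_i) = H(0) = 0$ for all $i$; hence $\sigma(f) = \sigma(f')$. Next, for injectivity, suppose $\sigma(f(x)) = 0$, i.e. $f(g_i)=0$ for $i=1,\ldots,n$. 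By Proposition \ref{prop:multi} there is $H(x)\in \mathcal{L}_q(x,\mathbb{F}_{q^m})$ with $f(x) = H(x)\circ \prod_{g\in <\mathbf{g}>}(x-g)$, so $f(x)$ lies in the left ideal and represents $0$ in the quotient. Thus $\ker\sigma = 0$.

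For the dimension count I would argue that the quotient $\mathcal{L}_q(x,\mathbb{F}_{q^m})/(\prod_{g\in <\mathbf{g}>}(x-g))$ has the same $\mathbb{F}_q$-dimension as $\mathbb{F}_{q^m}^n$, namely $nm$. Each class in the quotient has a unique representative that is a linearized polynomial of $q$-degree $<n$ (using a division/reduction on the right by the degree-$n$ linearized polynomial $\prod_{g\in <\mathbf{g}>}(x-g)$, whose leading coefficient is a unit; the reduction stays within $\mathcal{L}_q$ because composition with $x^{q^j}$ shifts $q$-degrees and the ideal is a left ideal), so the quotient is spanned over $\mathbb{F}_{q^m}$ by $x, x^q, \ldots, x^{q^{n-1}}$ and these are $\mathbb{F}_{q^m}$-linearly independent in the quotient (a nonzero combination has $q$-degree $<n$ and cannot be divisible on the right by a degree-$n$ polynomial unless it is $0$). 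Hence the quotient has $\mathbb{F}_{q^m}$-dimension exactly $n$, i.e. $\mathbb{F}_q$-dimension $nm$, matching $\dim_{\mathbb{F}_q}\mathbb{F}_{q^m}^n = nm$. An injective $\mathbb{F}_q$-linear map between finite-dimensional $\mathbb{F}_q$-vector spaces of equal dimension is an isomorphism, which completes the proof.

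The main obstacle I expect is making the "unique representative of $q$-degree $<n$" step fully rigorous in the non-commutative setting: one must confirm that right-division by $\prod_{g\in <\mathbf{g}>}(x-g)$ in the skew-polynomial ring $\mathcal{L}_q(x,\mathbb{F}_{q^m})$ works as a Euclidean-type algorithm (it does, since the leading coefficient of the divisor is nonzero and composition is $\mathbb{F}_{q^m}$-linear on the leading term), and that the left ideal is therefore an $\mathbb{F}_{q^m}$-subspace complemented by $\mathrm{span}_{\mathbb{F}_{q^m}}\{x, x^q, \ldots, x^{q^{n-1}}\}$. An alternative that sidesteps the division algorithm: combine injectivity of $\sigma$ with the fact, already implicit in the excerpt via Proposition \ref{prop: Lagrange}, that every target vector $\mathbf{r}\in\mathbb{F}_{q^m}^n$ is hit — $\Lambda_{\mathbf{g},\mathbf{r}}(x)$ is a linearized polynomial with $\Lambda_{\mathbf{g},\mathbf{r}}(g_i)=r_i$, so $\sigma$ is surjective directly, and injectivity plus surjectivity gives the isomorphism without any dimension bookkeeping. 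I would likely present this second route as the clean argument and relegate the degree-$<n$ normal form to a remark.
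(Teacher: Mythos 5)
Your proposal is correct, and the route you say you would actually present (well-definedness from $\prod_{g\in <\mathbf{g}>}(x-g)$ vanishing at each $g_i$, injectivity via Proposition \ref{prop:multi}, surjectivity via the $q$-Lagrange polynomial of Proposition \ref{prop: Lagrange}) is exactly the paper's proof. The dimension-count alternative you sketch is also sound but unnecessary here.
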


\begin{proof} First,  $\sigma$ is well-defined since   the polynomial $\prod_{g\in <\mathbf{g}>}(x-g)$ vanishes at every $g_i$.
Second, if $f(g_i)=0$ for all $i=1,\ldots, n$, then there exists $H(x)\in \mathcal{L}_q(x,\mathbb{F}_{q^m})$ such that $f(x)=H(x)\circ \prod_{g\in <\mathbf{g}>}(x-g)$ by Proposition  \ref{prop:multi} and so $\sigma$ is one-to-one.  Third, we show that $\sigma$ is surjective.  For a given $\mathbf{r}=(r_1,\ldots,r_n)\in \mathbb{F}^{n}_{q^m}$,  we have the
$q$-Lagrange polynomial $\Lambda_{\mathbf{g},\mathbf{r}}(x)$ satisfying  $\Lambda_{\mathbf{g},\mathbf{r}}(g_i)=r_i$ for $i=1,\ldots,n$ by Proposition \ref{prop: Lagrange}.  The result is proved.
\end{proof}

\smallskip
The $q$-linearized polynomial $\prod_{g\in <\mathbf{g}>}(x-g)$ has $q$-degree $n$. It follows that any element $f(x)\in \mathcal{L}_q(x,\mathbb{F}_{q^m})$ can be written uniquely
in the form
$$f(x)=h(x)\circ \prod_{g\in <\mathbf{g}>}(x-g)   + r(x),$$
where $h(x), r(x) \in \mathcal{L}_q(x,\mathbb{F}_{q^m})$ and $r(x)$ has $q$-degree smaller than $n$. This is the $q$-division algorithm
in the non-commutative ring $\mathcal{L}_q(x,\mathbb{F}_{q^m})$. As $\mathbb{F}_q$-vector spaces, the quotient
$\mathcal{L}_q(x,\mathbb{F}^{m}_q)/(\prod_{g\in <\mathbf{g}>}(x-g))$ is thus represented by all $q$-linearized polynomials of $q$-degree less than $n$. That is,
$$\mathcal{L}_q(x,\mathbb{F}_{q^m})/(\prod_{g\in <\mathbf{g}>}(x-g)) =\{ f \in \mathcal{L}_q(x,\mathbb{F}_{q^m}) |\deg_q(f)<n \}.$$
Using the isomorphism $\sigma$,  we can identify any word $u\in \mathbb{F}^n_{q^m}$ with $\sigma(f)$ for
a unique polynomial $f(x)\in \mathcal{L}_q(x,\mathbb{F}_{q^m})$ with $\deg_q(f)< n$. When $\deg_q(f)\leq k-1$, it is easy to see that the distance $d_R(\sigma_f, \mathcal{G})=0$ by the definition.  It was proved in \cite{Yan2} that the covering radius of $\mathcal{G}$ is $n-k$.  Thus, we have $d_R(\sigma_f, \mathcal{G}) \leq n-k$ by the definition of covering radius. When $k\leq \deg_q(f)<n$, we provide a  tight lower bound for
$d_R(\sigma_f, \mathcal{G})$ as follows.

\medskip
\begin{theorem} \label{thm: property}
Let $f(x)\in \mathcal{L}_q(x,\mathbb{F}_{q^m})$ with $\deg_q(f)< n$ and let $\sigma_f = \sigma(f) \in \mathbb{F}^n_{q^m}$ be the corresponding word. If $k\leq \deg_q(f)<n$, then
$$d_R(\sigma_f, \mathcal{G}) \geq n-\deg_q (f).$$
Furthermore, we suppose $f$ is monic, then
$d_R(\sigma_f, \mathcal{G})=n-\deg_q(f)$ if and only if there exists a $\deg_q(f)$-dimensional subspace $H$ of $<\mathbf{g}>$ such that
\[f(x)-v(x)=\prod_{h\in H}(x-h),\]
for some $v(x)\in \mathcal{L}_q(x, \mathbb{F}_{q^m})$ with $\deg_q(v)\leq k-1$.
\end{theorem}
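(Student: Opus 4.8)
The plan is to work entirely on the polynomial side via the isomorphism $\sigma$. The key translation is: a codeword $\mathbf{c}\in\mathcal{G}$ corresponds to a $q$-linearized polynomial $v(x)$ with $\deg_q(v)\le k-1$, and the rank distance $d_R(\sigma_f,\mathbf{c})$ equals the $\mathbb{F}_q$-dimension of the image of $(f-v)(x)$ restricted to $\langle\mathbf{g}\rangle$; equivalently, $n$ minus the dimension of the kernel of $(f-v)$ inside $\langle\mathbf{g}\rangle$, i.e. $n$ minus the number of $g_i$-directions on which $f-v$ vanishes. (This uses that $\bar{\mathbf{u}}$ has rank equal to $\dim_{\mathbb{F}_q}\{$image of the associated $\mathbb{F}_q$-linear map$\}$, and that $f-v$, being $q$-linearized, acts $\mathbb{F}_q$-linearly on $\langle\mathbf{g}\rangle$.) So I first record this reformulation carefully.

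For the lower bound: given any $v$ with $\deg_q(v)\le k-1\le \deg_q(f)$, the polynomial $w(x):=(f-v)(x)$ is a nonzero $q$-linearized polynomial with $\deg_q(w)=\deg_q(f)$ (the leading term of $f$ survives since $k-1<\deg_q(f)$). By Lemma 1 (Niederreiter), the roots of $w$ form an $\mathbb{F}_q$-vector space of dimension at most $\deg_q(w)=\deg_q(f)$; hence the kernel of $w$ inside $\langle\mathbf{g}\rangle$ has dimension $\le\deg_q(f)$, so $d_R(\sigma_f,\mathbf{c})=n-\dim\ker(w|_{\langle\mathbf{g}\rangle})\ge n-\deg_q(f)$. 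Taking the minimum over $\mathbf{c}\in\mathcal{G}$ gives the claimed bound.

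For the equality characterization, I argue both directions. ($\Leftarrow$) If $f(x)-v(x)=\prod_{h\in H}(x-h)$ for a $\deg_q(f)$-dimensional subspace $H\subseteq\langle\mathbf{g}\rangle$ and some $v$ with $\deg_q(v)\le k-1$, then $f-v$ vanishes on all of $H$, so $\dim\ker((f-v)|_{\langle\mathbf{g}\rangle})\ge\deg_q(f)$, giving $d_R(\sigma_f,\mathcal{G})\le n-\deg_q(f)$; combined with the lower bound we get equality. ($\Rightarrow$) Suppose $d_R(\sigma_f,\mathcal{G})=n-\deg_q(f)$, achieved by some codeword $v$. Then $w=f-v$ has $\deg_q(w)=\deg_q(f)=:d$ and $\dim\ker(w|_{\langle\mathbf{g}\rangle})\ge n-d_R=d$; but by Lemma 1 this kernel has dimension $\le d$, so it is exactly a $d$-dimensional subspace $H\subseteq\langle\mathbf{g}\rangle$, and moreover $H$ is the full root space of $w$ in $\overline{\mathbb{F}_{q^m}}$ (since the root space already has dimension $\le d$). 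Therefore $\prod_{h\in H}(x-h)$ divides $w$ on the right in $\mathcal{L}_q(x,\mathbb{F}_{q^m})$ — here I invoke Proposition \ref{prop:multi} with $\mathbf{g}$ replaced by a basis of $H$ — so $w(x)=H_0(x)\circ\prod_{h\in H}(x-h)$ with $H_0$ $q$-linearized; comparing $q$-degrees, $\deg_q H_0=0$, so $H_0(x)=ax$ for some nonzero $a\in\mathbb{F}_{q^m}$, and comparing leading coefficients (both monic, as $f$ is monic and $\prod_{h\in H}(x-h)$ is monic) forces $a=1$. Hence $f(x)-v(x)=\prod_{h\in H}(x-h)$, as required.

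The main obstacle I anticipate is the clean justification that $\prod_{h\in H}(x-h)$ right-divides $w$ in the non-commutative ring: one must check that the hypothesis of Proposition \ref{prop:multi} is met, namely that $w$ vanishes on a \emph{spanning set} of $H$, which is immediate once $H$ is identified as the (entire) root space of $w$; the subtlety is ensuring the root space does not extend beyond $\langle\mathbf{g}\rangle$, which is exactly where the degree bound $\deg_q w = d$ from Lemma 1 is used to pin the dimension at exactly $d$. A secondary point to handle with care is the rank-metric reformulation itself — verifying that $w_R(\sigma_f - \mathbf{c})$ genuinely equals $n$ minus the $\mathbb{F}_q$-dimension of $\ker(w|_{\langle\mathbf{g}\rangle})$, rather than something involving the ambient $\mathbb{F}_{q^m}$-structure — but this follows directly from the definition of rank weight via the matrix $\bar{\mathbf{u}}$ together with $\mathbb{F}_q$-linearity of $q$-linearized polynomials.
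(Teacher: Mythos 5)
Your proof is correct and follows essentially the same route as the paper's: both reduce $d_R(\sigma_f,\mathbf{c})$ to $n$ minus the dimension of the kernel of the $\mathbb{F}_q$-linear map induced by $f-v$ on $\langle\mathbf{g}\rangle$, bound that kernel by $\deg_q(f-v)=\deg_q(f)$ via the root-space lemma, and characterize equality by the root space being a $\deg_q(f)$-dimensional subspace $H\subseteq\langle\mathbf{g}\rangle$. The only difference is presentational: you justify the final identity $f-v=\prod_{h\in H}(x-h)$ explicitly through Proposition \ref{prop:multi} plus degree and leading-coefficient comparison, a step the paper asserts directly.
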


\medskip
\begin{proof}
Let $u(x)$ be any $q$-polynomial over $\mathbb{F}_{q^{m}}$. We consider the $\mathbb{F}_q$-linear map defined by
\begin{eqnarray*}
  \pi_{u} &:&  <g_{1}, \cdots, g_{n}> \rightarrow  <u(g_{1}), \cdots, u(g_{n}) >\\
   & & \sum_{i=1}^{n}\xi_{i}g_{i} \mapsto \sum_{i=1}^{n}\xi_{i} u(g_{i})=u(\sum_{i=1}^{n}\xi_{i}g_{i}).
\end{eqnarray*}
It is clear that the map $\pi_{u}$ is surjective and ker$(\pi_{u}) \subseteq $ Root$(u)$ (the set of roots of $u(x)$). So dim$_{\mathbb{F}_{q}}$ ker$(\pi_{u}) \leq $dim$_{\mathbb{F}_{q}}$ Root$(u)\leq$ $\deg_{q}(u)$. Then
\begin{eqnarray*}
   & & \mbox{dim}_{\mathbb{F}_q} <u(g_{1}), \cdots, u(g_{n}) > \\
  &=&\mbox{dim}_{\mathbb{F}_q} <g_{1}, \cdots, g_{n} > - \mbox{dim}_{\mathbb{F}_q} \mbox{ker}(\pi_{u})\\
    &\geq& n-\deg_{q}(u).
\end{eqnarray*}
It follows that
\begin{eqnarray*}
   & & d_R(\sigma_f, \mathcal{G})\\
   &=&\min\limits_{\deg_{q}(v) < k}\textnormal{rank}((f-v)(g_{1}),\cdots, (f-v)(g_{n}))\\
   &=&\min\limits_{\deg_{q}(v) < k}\textnormal{dim}_{\mathbb{F}_q}<(f-v)(g_{1}), \cdots, (f-v)(g_n)> \\
    &\geq& \min\limits_{\deg_{q}(v) < k}(n-\deg_{q}(f-v))=n-\deg_{q}(f).
\end{eqnarray*}
The last equality holds since $\deg_{q}(f-v)=\deg_{q}(f)$ for any $q$-polynomial $v(x)$ with $\deg_{q}(v) < k$.

Furthermore, from the above proof, we know  $d_R(\sigma_f, \mathcal{G})=n-\deg_q(f)$ if and only if
\begin{eqnarray*}
   \textnormal{dim}_{\mathbb{F}_{q}}\textnormal{Root}(f-v) &=& \textnormal{dim}_{\mathbb{F}_{q}} \textnormal{ker}(\pi_{f-v}) \\
   &=& \deg_{q}(f-v)=\deg_{q}(f)
\end{eqnarray*}
for some $q$-polynomial $v(x)$ with $\deg_{q}(v) < k$, which is equivalent to
$$f(x)-v(x)=\prod_{h \in H}(x-h),$$
for some $\deg_{q}(f)$-dimensional subspace $H$ of $<g_{1}, \cdots, g_{n} >$.
The theorem is proved.
\end{proof}

By Theorem \ref{thm: property} and the fact $d_R(\sigma_f, \mathcal{G}) \leq n-k$, we immediately deduce the following corollary, which provide a class of deep holes of the Gabidulin code $\mathcal{G}$.

\medskip

\begin{corollary} \label{cor: deepholes}
The elements of the set $\{\sigma_f: \deg_q(f(x))=k, f(x)\in  \mathcal{L}_q(x,\mathbb{F}_{q^m})\}$ are  deep holes of the Gabidulin code $\mathcal{G}$ and so the number of deep holes of $\mathcal{G}$ is at least  $(q^{m}-1)q^{mk}$.
\end{corollary}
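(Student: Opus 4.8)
The plan is to obtain the corollary by a sandwich argument: the lower bound of Theorem \ref{thm: property} forces $d_R(\sigma_f,\mathcal{G})$ to be at least $n-k$ for polynomials of $q$-degree exactly $k$, while the covering radius bound forces it to be at most $n-k$; these two facts pin down the value, and a count of such polynomials via the isomorphism $\sigma$ gives the stated lower bound on the number of deep holes.

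First I would fix $f(x)\in\mathcal{L}_q(x,\mathbb{F}_{q^m})$ with $\deg_q(f)=k$, assuming $1\le k<n$ (for $k=n$ the code $\mathcal{G}$ is all of $\mathbb{F}_{q^m}^n$, has covering radius $0$, and the statement is vacuous). Since $k\le\deg_q(f)=k<n$, Theorem \ref{thm: property} applies and gives $d_R(\sigma_f,\mathcal{G})\ge n-\deg_q(f)=n-k$. On the other hand, it is known \cite{Yan2} that $\rho_R(\mathcal{G})=n-k$, so $d_R(\sigma_f,\mathcal{G})\le n-k$. Hence $d_R(\sigma_f,\mathcal{G})=n-k=\rho_R(\mathcal{G})$, which is exactly the definition of $\sigma_f$ being a deep hole of $\mathcal{G}$; this establishes the first assertion.

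For the enumeration I would count the $q$-linearized polynomials of $q$-degree exactly $k$: writing $f(x)=\sum_{i=0}^{k}a_ix^{q^i}$, the top coefficient $a_k$ ranges over $\mathbb{F}_{q^m}\setminus\{0\}$ and $a_0,\dots,a_{k-1}$ range freely over $\mathbb{F}_{q^m}$, so there are exactly $(q^m-1)q^{mk}$ of them. Because $k<n$, each such $f$ has $q$-degree strictly less than $n$ and is therefore one of the representatives of $\mathcal{L}_q(x,\mathbb{F}_{q^m})/(\prod_{g\in<\mathbf{g}>}(x-g))$ identified above; by Proposition \ref{prop: map}, $\sigma$ restricted to these representatives is injective, so distinct such $f$ give distinct words $\sigma_f$, all of which are deep holes by the previous paragraph. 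This yields at least $(q^m-1)q^{mk}$ deep holes of $\mathcal{G}$.

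The proof is short enough that there is no genuine obstacle; the only points deserving care are (i) checking that the hypothesis $\deg_q(f)\ge k$ of Theorem \ref{thm: property} is met --- it holds with equality, which is precisely the largest $q$-degree for which the lower bound can reach the covering radius --- and (ii) invoking the isomorphism of Proposition \ref{prop: map} so as to be sure that the $(q^m-1)q^{mk}$ polynomials are pairwise distinct modulo the left ideal, hence genuinely produce that many \emph{distinct} deep holes rather than fewer. The degenerate case $k=n$ should be dispatched in one line.
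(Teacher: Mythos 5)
Your proposal is correct and follows exactly the paper's route: the paper deduces the corollary immediately from the lower bound of Theorem \ref{thm: property} (giving $d_R(\sigma_f,\mathcal{G})\ge n-k$ when $\deg_q(f)=k$) together with the covering-radius bound $d_R(\sigma_f,\mathcal{G})\le n-k$, and the count $(q^m-1)q^{mk}$ comes from the same enumeration of monic-degree-$k$ coefficient choices combined with the injectivity of $\sigma$ from Proposition \ref{prop: map}. Your added remarks on the degenerate case $k=n$ and on distinctness modulo the left ideal are sensible but do not change the argument.
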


\medskip

According to the definition in Eq. (\ref{def}), we may also study  Gabidulin codes in  Hamming metric.  It was showed that such codes are MDS codes in \cite{gabidulin}. We use $d_{H}(\mathbf{u},\mathbf{v})$ and $d_H(\mathbf{u}, \mathcal{G})$ to denote the Hamming distance between vectors $\mathbf{u}$ and  $\mathbf{v}$ and the Hamming distance of a word $\mathbf{u}$ to $\mathcal{G}$, respectively.  Similarly, we have the following theorem.

\begin{theorem} \label{thm: propertyH}
Let $f(x)\in \mathcal{L}_q(x,\mathbb{F}_{q^m})$ with $\deg_q(f)< n$ and let $\sigma_f = \sigma(f) \in \mathbb{F}^n_{q^m}$ be the corresponding word. If $k\leq \deg_q(f)<n$, then
$$d_{H}(\sigma_f, \mathcal{G})\geq n-\deg_q (f).$$
Furthermore, suppose $f$ is monic, then
$d_H(\sigma_f, \mathcal{G})=n-\deg_q(f)$ if and only if there exists a subset $E=\{g_{i_1},\ldots,
g_{i_{\deg_q(f)}}\}$ of $\{g_1, \cdots, g_n\}$ such that
\[f(x)-v(x)=\prod_{g\in <E>}(x-g),\]
for some $v(x)\in \mathcal{L}_q(x, \mathbb{F}_{q^m})$ with $\deg_q(v)\leq k-1$.
\end{theorem}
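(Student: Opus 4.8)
The plan is to mirror the argument of Theorem \ref{thm: property}, replacing the rank-weight estimate by the Hamming-weight estimate coming from the fact that a nonzero $q$-linearized polynomial of $q$-degree $d$ has at most $q^d$ roots, hence vanishes on at most $d$ of the $\mathbb{F}_q$-linearly independent points $g_1,\ldots,g_n$. Concretely, for any $v(x)$ with $\deg_q(v)<k$ the polynomial $f-v$ is nonzero (since $\deg_q(f-v)=\deg_q(f)\ge k$) and $q$-linearized, so $\textnormal{Root}(f-v)$ is an $\mathbb{F}_q$-subspace of dimension at most $\deg_q(f)$; since $g_1,\ldots,g_n$ are independent, at most $\deg_q(f)$ of them can lie in this subspace. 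Therefore the vector $((f-v)(g_1),\ldots,(f-v)(g_n))$ has at least $n-\deg_q(f)$ nonzero coordinates, i.e. Hamming weight $\ge n-\deg_q(f)$. Minimizing over $v$ gives $d_H(\sigma_f,\mathcal{G})\ge n-\deg_q(f)$, which is the first assertion.

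For the equality characterization, first I would observe that equality $d_H(\sigma_f,\mathcal G)=n-\deg_q(f)$ holds iff there exists $v$ with $\deg_q(v)<k$ such that $f-v$ vanishes at exactly $\deg_q(f)$ of the points $g_1,\ldots,g_n$, say at the set $E=\{g_{i_1},\ldots,g_{i_{\deg_q(f)}}\}$. In that case, applying Proposition \ref{prop:multi} with the spanning set $E$, the polynomial $f-v$ is divisible on the right by $\prod_{g\in\langle E\rangle}(x-g)$; comparing $q$-degrees, both have $q$-degree exactly $\deg_q(f)=\dim\langle E\rangle$ (the $g_{i_j}$ being independent), so the quotient $H(x)$ has $q$-degree $0$, i.e. $H(x)=cx$ for some $c\in\mathbb{F}_{q^m}^\ast$. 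Since $f$ is assumed monic and $\prod_{g\in\langle E\rangle}(x-g)$ is monic as a $q$-linearized polynomial, we get $c=1$, so $f(x)-v(x)=\prod_{g\in\langle E\rangle}(x-g)$ with $\deg_q(v)\le k-1$. Conversely, if $f-v$ has this product form then it vanishes at every element of $E$, in particular at the $\deg_q(f)$ points $g_{i_1},\ldots,g_{i_{\deg_q(f)}}$, so $\sigma_f-\sigma(v)$ has at most $n-\deg_q(f)$ nonzero coordinates, forcing $d_H(\sigma_f,\mathcal G)\le n-\deg_q(f)$; combined with the lower bound this is equality.

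The step I expect to require the most care is the passage from ``$f-v$ vanishes at the points of $E$'' to ``$f-v$ vanishes on all of $\langle E\rangle$ and the quotient has $q$-degree zero'': one must invoke Proposition \ref{prop:multi} correctly (it gives right-divisibility by the $q$-annihilator of the span of the chosen points, using that a $q$-linearized polynomial vanishing on a set of points automatically vanishes on their $\mathbb{F}_q$-span, by Lemma 1), and then use the $q$-degree bookkeeping together with monicity to pin down $H(x)=x$. Everything else is a direct transcription of the rank-metric proof, with ``$\dim_{\mathbb{F}_q}\langle u(g_1),\ldots,u(g_n)\rangle\ge n-\deg_q(u)$'' replaced by ``$\#\{i: u(g_i)\neq 0\}\ge n-\deg_q(u)$'', which is why I do not anticipate any further obstacle.
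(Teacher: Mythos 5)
Your proof is correct and follows essentially the same route as the paper: the equality characterization is exactly the paper's argument (vanishing at $\deg_q(f)$ of the independent points, right-divisibility by the $q$-annihilator via Proposition \ref{prop:multi}, then degree and monicity bookkeeping forcing $H(x)=x$). The only cosmetic difference is that you obtain the lower bound by directly counting roots of $f-v$ (dimension of the root space at most $\deg_q(f)$), whereas the paper derives it from the same factorization applied to a distance-attaining $v$; the underlying fact is identical.
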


\medskip
\begin{proof}
Let now $t= n-d_H(\sigma_f, \mathcal{G})$.
By definition of the Hamming distance, there exists  some $v(x)$ and non-zero $H(x) \in \mathcal{L}_q(x,\mathbb{F}_{q^m})$ with $\deg_q(v)<k$ such that
\[f(x)-v(x)=H(x)\circ \prod_{g\in <g_{i_1}, \ldots,g_{i_t}>}(x-g)\]
for some indices $1\leq i_1<\cdots < i_t \leq n$.
Comparing the $q$-degrees of both sides, we deduce that  $t \leq \deg_q(f)$. This proves that   $n-\deg_q (f)\leq d_H(\sigma_f, \mathcal{G})$.
Furthermore, if $f$ is monic, the equality $t=\deg_q(f)$ holds if and only if $H(x)=x$, in which case, we obtain
\[f(x)-v(x)=\prod_{g\in <g_{i_1}, \ldots,g_{i_t}>}(x-g)\]
and the theorem is true.
\end{proof}

It is well known that $d_H(\mathbf{u}, C) \leq n-k$ for any linear code of length $n$ and dimension $k$. Thus, by Theorem \ref{thm: propertyH},  the result in Corollary \ref{cor: deepholes} still holds in Hamming metric.

\section{Some other deep holes for  certain  Gabidulin codes}

We hope to obtain more deep holes of Gabidulin codes and so consider monic $f(x)$ of $\deg_{q}(f)=k+d, d \geq 1$, where  $f(x)=x^{q^{k+d}}-a_{1}x^{q^{k+d-1}}+a_{2}x^{q^{k+d-2}}+\cdots+(-1)^{d}a_{d}x^{q^{k}}+\cdots$.
In Theorem \ref{thm: property}, if we write $\prod_{h \in H}(x-h)=x^{q^{k+d}}-h_{1}x^{q^{k+d-1}}+\cdots+(-1)^{d}h_{d}x^{q^{k}}+\cdots$ and let $\beta_{1}, \beta_{2}, \ldots, \beta_{k+d} \in \mathbb{F}_{q^{m}}$ be a basis of $H$, then $d_R(\sigma_f, \mathcal{G})=n-\deg_{q}(f)$ is equivalent to
$$a_{i}=h_{i},\textnormal{ for all }1 \leq i \leq d. $$
According to the process of the proof of \cite[Lemma 3.51]{Nied}, we know that
\[h_{i}=\frac{\det(\mathcal{R}_{k+d-i}(\beta_1,\cdots,\beta_{k+d}))
}{\det(M_{k+d}(\beta_1,\cdots,\beta_{k+d}))},\]
where
  $\mathcal{R}_{k+d-i}(\beta_1,\cdots,\beta_{k+d})$ denotes the matrix $M_{k+d+1}(\beta_1,\cdots$,  $\beta_{k+d})$  deleting the row $(\beta_{1}^{q^{k+d-i}}, \cdots, \beta_{k+d}^{q^{k+d-i}})$. As a result,  we have
 $d_R(\sigma_f, \mathcal{G})=n-(k+d)$ if and only if there exist $k+d$ linearly independent elements $\beta_{1}, \beta_{2}, \ldots, \beta_{k+d}$ of $<g_{1}, \cdots, g_{n}>$ such that
$$a_{i}=\frac{\det(\mathcal{R}_{k+d-i}(\beta_1,\cdots,\beta_{k+d}))
}{\det(M_{k+d}(\beta_1,\cdots,\beta_{k+d}))},
              \mbox{ for all } 1 \leq i \leq d,$$
  where  $\mathcal{R}_{k+d-i}(\beta_1,\cdots,\beta_{k+d})$ denotes as the above.

\medskip

When $d=1$, i.e., $\deg_q(f)=k+1$, then by  Theorem \ref{thm: property}, $\sigma_{f}$ is not a deep hole of $\mathcal{G}$ if and only if $d_R(\sigma_f, \mathcal{G})=n-(k+1)$.  Thus, by the above discussion, we have

\medskip
\begin{lemma}\label{lem: deg of k+1}
Let $f(x)=x^{q^{k+1}}-a_{1}x^{q^{k}}+\cdots$.  Then $\sigma_{f}$ is not a deep hole of $\mathcal{G}$ if and only if there exist $k+1$ linearly independent elements $\beta_{1}, \beta_{2}, \ldots, \beta_{k+1}$ of $<g_{1}, \cdots, g_{n}>$ such that
$$a_{1}=\frac{\det(\mathcal{R}_{k}(\beta_1, \cdots,\beta_{k+1}))}{\det(M_{k+1}(\beta_1,\cdots,\beta_{k+1}))},$$
where $\mathcal{R}_{k}(\beta_1, \cdots,\beta_{k+1})$ denotes the matrix $M_{k+2}(\beta_1$, $\cdots$, $\beta_{k+1})$ without the row $(\beta_{1}^{q^k}, \cdots$,  $\beta_{k+1}^{q^{k}})$.
\end{lemma}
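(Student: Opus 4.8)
The plan is to obtain the lemma by specializing to $d=1$ the discussion that precedes it, combined with Theorem~\ref{thm: property} and the known fact that the covering radius of $\mathcal{G}$ equals $n-k$. First I would record a dichotomy for the distance. Since $f$ is monic of $q$-degree $k+1$, Theorem~\ref{thm: property} gives $d_R(\sigma_f,\mathcal{G})\ge n-(k+1)$, while the covering radius bound gives $d_R(\sigma_f,\mathcal{G})\le n-k$. Hence $d_R(\sigma_f,\mathcal{G})\in\{n-k-1,\,n-k\}$, so $\sigma_f$ fails to be a deep hole of $\mathcal{G}$ exactly when $d_R(\sigma_f,\mathcal{G})=n-k-1=n-\deg_q(f)$.

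Next I would invoke the equivalence in the second half of Theorem~\ref{thm: property} for the monic polynomial $f$: the equality $d_R(\sigma_f,\mathcal{G})=n-\deg_q(f)$ holds if and only if there is a $(k+1)$-dimensional subspace $H$ of $<g_1,\ldots,g_n>$ and some $v(x)\in\mathcal{L}_q(x,\mathbb{F}_{q^m})$ with $\deg_q(v)\le k-1$ satisfying $f(x)-v(x)=\prod_{h\in H}(x-h)$. Choosing a basis $\beta_1,\ldots,\beta_{k+1}$ of $H$ and writing $\prod_{h\in H}(x-h)=x^{q^{k+1}}-h_1x^{q^k}+\cdots$, I would compare the coefficients of the two linearized polynomials. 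The coefficient of $x^{q^{k+1}}$ equals $1$ on both sides, since $\prod_{h\in H}(x-h)$ is the ordinary monic product of degree $q^{k+1}$ and hence monic as a linearized polynomial; the coefficient of $x^{q^k}$ equals $-a_1$ on the left (the summand $v$ does not contribute here because $\deg_q(v)\le k-1$) and $-h_1$ on the right; and for each $j\le k-1$ the coefficient of $x^{q^j}$ of $f-v$ can be made to agree with that of $\prod_{h\in H}(x-h)$ by an appropriate choice of the corresponding coefficient of $v$. Consequently, such an $H$ and $v$ exist if and only if there are $k+1$ linearly independent $\beta_1,\ldots,\beta_{k+1}\in<g_1,\ldots,g_n>$ with $a_1=h_1$.

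Finally I would substitute the determinant expression for $h_1$ recalled just before the lemma — the $d=1$ instance of the identity drawn from the proof of \cite[Lemma 3.51]{Nied} — namely $h_1=\det(\mathcal{R}_k(\beta_1,\ldots,\beta_{k+1}))/\det(M_{k+1}(\beta_1,\ldots,\beta_{k+1}))$. This rewrites $a_1=h_1$ as the asserted equality, and combined with the dichotomy of the first step it proves the lemma.

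The only point requiring care is the coefficient bookkeeping in the middle step: one must verify that the freedom in $v$, which has $q$-degree at most $k-1$, exactly absorbs the discrepancies in the coefficients of $x^{q^0},\ldots,x^{q^{k-1}}$ and imposes no constraint there, while it never affects the coefficient of $x^{q^k}$, so that $a_1=h_1$ is genuinely the \emph{only} surviving condition. Beyond this straightforward check, the argument is a direct specialization of material already developed in the excerpt, so I do not anticipate a substantial obstacle.
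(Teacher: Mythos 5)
Your proposal is correct and follows essentially the same route as the paper: the dichotomy $d_R(\sigma_f,\mathcal{G})\in\{n-k-1,n-k\}$ from Theorem \ref{thm: property} and the covering radius $n-k$, the equality case of Theorem \ref{thm: property} with the coefficient of $x^{q^k}$ as the only surviving condition (lower coefficients absorbed by $v$), and the determinant formula for $h_1$ from \cite[Lemma 3.51]{Nied}. No issues to report.
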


Similar to the above discussion, we get the result for Hamming metric case by Theorem \ref{thm: propertyH}.
Let $f(x)=x^{q^{k+d}}-a_{1}x^{q^{k+d-1}}+a_{2}x^{q^{k+d-2}}+\cdots+(-1)^{d}a_{d}x^{q^{k}}+\cdots$. Then $d_H(\sigma_f, \mathcal{G})=n-(k+d)$ if and only if there exist $k+d$ distinct elements  $g_{i_1}, g_{i_2}, \ldots, g_{i_{k+d}}$ of $\{g_{1}, \cdots, g_{n}\}$ such that
$$a_{i}=\frac{\det(\mathcal{R}_{k+d-i}(g_{i_1},\cdots, g_{i_{k+d}}))
}{\det(M_{k+d}(g_{i_1},\cdots, g_{i_{k+d}}))},
              \mbox{ for all } 1 \leq i \leq d,$$
  where  $\mathcal{R}_{k+d-i}(g_{i_1},\cdots, g_{i_{k+d}})$ denotes as the above.

\medskip
\begin{lemma}\label{lem: deg of k+1Ham}
Let $f(x)=x^{q^{k+1}}-a_{1}x^{q^{k}}+\cdots$.  Then $\sigma_{f}$ is not a deep hole of $\mathcal{G}$ in  Hamming metric if and only if there exist $k+1$ distinct elements $g_{i_1}, g_{i_2}, \ldots, g_{i_{k+1}}$ of $\{g_{1}, \cdots, g_{n}\}$  such that
$$a_{1}=\frac{\det(\mathcal{R}_{k}(g_{i_1},\cdots, g_{i_{k+1}}))}{\det(M_{k+1}(g_{i_1},\cdots, g_{i_{k+d}}))},$$
where $\mathcal{R}_{k}(g_{i_1},\cdots, g_{i_{k+d}})$ denotes the matrix $M_{k+2}(g_{i_1}$, $\cdots$,  $g_{i_{k+d}})$ without the row $(g_{i_1}^{q^k}, \cdots$, $g_{i_{k+1}}^{q^{k}})$.
\end{lemma}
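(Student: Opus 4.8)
The plan is to run the argument that produced the rank-metric Lemma~\ref{lem: deg of k+1}, but tracking subsets of $\{g_1,\dots,g_n\}$ in place of subspaces of $<\mathbf g>$. First I would pin down the range of values of $d_H(\sigma_f,\mathcal G)$. Since $\mathcal G$ is MDS in the Hamming metric and the words $\sigma_f$ with $\deg_q(f)=k$ are Hamming deep holes (as noted right after Theorem~\ref{thm: propertyH}, carrying over Corollary~\ref{cor: deepholes}), together with the bound $d_H(\mathbf u,\mathcal G)\le n-k$, the Hamming covering radius of $\mathcal G$ equals $n-k$; hence $d_H(\sigma_f,\mathcal G)\le n-k$ for every word. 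On the other hand, Theorem~\ref{thm: propertyH} applied with $\deg_q(f)=k+1$ gives $d_H(\sigma_f,\mathcal G)\ge n-(k+1)$. Therefore $d_H(\sigma_f,\mathcal G)\in\{n-k-1,\,n-k\}$, and $\sigma_f$ fails to be a deep hole of $\mathcal G$ in the Hamming metric exactly when $d_H(\sigma_f,\mathcal G)=n-(k+1)$.

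Next I would invoke the equality clause of Theorem~\ref{thm: propertyH}. Because $f$ is monic of $q$-degree $k+1$, the equality $d_H(\sigma_f,\mathcal G)=n-(k+1)$ holds if and only if there is a subset $E=\{g_{i_1},\dots,g_{i_{k+1}}\}$ of $\{g_1,\dots,g_n\}$ and a $v(x)\in\mathcal L_q(x,\mathbb F_{q^m})$ with $\deg_q(v)\le k-1$ such that $f(x)-v(x)=\prod_{g\in <E>}(x-g)$; note $E$ is automatically $\mathbb F_q$-linearly independent as a subset of $\{g_1,\dots,g_n\}$, so $\dim_{\mathbb F_q}<E>=k+1$ and the right-hand side is monic of $q$-degree $k+1$. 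Writing $\prod_{g\in <E>}(x-g)=x^{q^{k+1}}-h_1 x^{q^{k}}+\cdots$ with the trailing terms of $q$-degree $\le k-1$, and comparing with $f(x)=x^{q^{k+1}}-a_1 x^{q^{k}}+\cdots$, the choice of $v$ (of $q$-degree at most $k-1$) can absorb precisely the coefficients of $x^{q^{j}}$ for $j\le k-1$, so the existence of such a $v$ is equivalent to the single scalar equation $a_1=h_1$.

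Finally I would express $h_1$ through Moore-type determinants, reusing the identity already invoked in Lemma~\ref{lem: deg of k+1}. Since $g_{i_1},\dots,g_{i_{k+1}}$ is a basis of $<E>$, one has $\prod_{g\in <E>}(x-g)=\det\!\big(M_{k+2}(g_{i_1},\dots,g_{i_{k+1}},x)\big)\big/\det\!\big(M_{k+1}(g_{i_1},\dots,g_{i_{k+1}})\big)$, and expanding the numerator along its last column exactly as in \cite[Lemma 3.51]{Nied} shows that the coefficient $h_1$ of $x^{q^{k}}$ equals $\det(\mathcal R_k(g_{i_1},\dots,g_{i_{k+1}}))\big/\det(M_{k+1}(g_{i_1},\dots,g_{i_{k+1}}))$, where $\mathcal R_k(g_{i_1},\dots,g_{i_{k+1}})$ is $M_{k+2}(g_{i_1},\dots,g_{i_{k+1}})$ with the row $(g_{i_1}^{q^{k}},\dots,g_{i_{k+1}}^{q^{k}})$ deleted. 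Substituting into $a_1=h_1$ gives the stated criterion.

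The only genuinely substantive step is the first one: establishing that $d_H(\sigma_f,\mathcal G)$ can only take the two values $n-k-1$ and $n-k$, which rests on the Hamming covering radius of $\mathcal G$ being exactly $n-k$. Everything after that is a verbatim transcription of the rank-metric proof of Lemma~\ref{lem: deg of k+1}, with "a $\deg_q(f)$-dimensional subspace $H$ of $<\mathbf g>$" replaced by "$<E>$ for a $(k+1)$-element subset $E\subseteq\{g_1,\dots,g_n\}$" and with the coefficient formula taken directly from \cite[Lemma 3.51]{Nied}, so I do not expect any new computation to be required.
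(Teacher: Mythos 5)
Your proposal is correct and follows essentially the same route as the paper: the paper obtains this lemma as the $d=1$ case of the discussion following Theorem~\ref{thm: propertyH}, i.e.\ noting $d_H(\sigma_f,\mathcal G)\in\{n-k-1,n-k\}$ via the bound $d_H(\mathbf u,\mathcal G)\le n-k$ and Theorem~\ref{thm: propertyH}, then applying the equality clause of that theorem and reading off the coefficient $h_1$ of $\prod_{g\in <E>}(x-g)$ from the Moore-determinant formula of \cite[Lemma 3.51]{Nied}. Your write-up just makes explicit the covering-radius step that the paper leaves implicit; no gap.
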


In the following,  we study  some other  deep holes for certain   Gabidulin codes.  In particular, we consider Gabidulin codes over $\mathbb{F}_{q^m}$ only  when $m=n$ in Proposition \ref{prop: ex1} and \ref{prop: ex2}.

\begin{proposition} \label{prop: ex1}
Let $\mathcal{G}$ be the Gabidulin code over $\mathbb{F}_{q^n}$ with linearly independent set $\mathbf{g}=\{g_1,\ldots, g_n\}$ and dimension $k$.  Let $f(x)=x^{q^{n-1}}+f_{\leq k-1}$, where $f_{\leq k-1}$ is a $q$-linearized polynomial over $\mathbb{F}_{q^n}$ of $q$-degree less than or equals to $k-1$. Then $\sigma_f$ is  a deep hole of $\mathcal{G}$.
\end{proposition}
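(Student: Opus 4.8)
The plan is to show that $d_R(\sigma_f,\mathcal{G})$ attains the covering radius $n-k$ of $\mathcal{G}$ (which equals $n-k$ by \cite{Yan2}). Since $d_R(\sigma_f,\mathcal{G})\le n-k$ holds automatically, it suffices to establish the reverse inequality $d_R(\sigma_f,\mathcal{G})\ge n-k$. First I would use the hypothesis $m=n$: the $n$ elements $g_1,\dots,g_n$ then form an $\mathbb{F}_q$-basis of $\mathbb{F}_{q^n}$, so $<\mathbf{g}>=\mathbb{F}_{q^n}$, and for any $q$-linearized polynomial $v$ with $\deg_q(v)<k$ the span $<(f-v)(g_1),\dots,(f-v)(g_n)>$ equals the image of the $\mathbb{F}_q$-linear operator $f-v$ acting on $\mathbb{F}_{q^n}$. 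Proceeding as in the proof of Theorem \ref{thm: property}, this reduces the problem to showing that for every such $v$,
$$\dim_{\mathbb{F}_q}\ker(f-v)\le k,$$
i.e. that $f-v$ has at most $q^k$ roots in $\mathbb{F}_{q^n}$.

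The heart of the argument is a Frobenius ``folding'' of the root equation. I would write $f-v=x^{q^{n-1}}+w(x)$, where $w=f_{\le k-1}-v$ has $q$-degree at most $k-1$, say $w(x)=\sum_{i=0}^{k-1}b_ix^{q^i}$ with $b_i\in\mathbb{F}_{q^n}$. If $x_0\in\mathbb{F}_{q^n}$ is a root of $f-v$, then $x_0^{q^{n-1}}=-w(x_0)$; raising both sides to the $q$-th power and using $x_0^{q^n}=x_0$ gives
$$x_0=-w(x_0)^q=-\sum_{i=0}^{k-1}b_i^{\,q}\,x_0^{q^{i+1}},$$
so $x_0$ is a root of $\widetilde{L}(x):=x+\sum_{i=0}^{k-1}b_i^{\,q}x^{q^{i+1}}$. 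Now $\widetilde{L}$ has ordinary degree at most $q^k$ and its coefficient of $x$ is $1$, so $\widetilde{L}\not\equiv0$ and it has at most $q^k$ roots. Hence $f-v$ has at most $q^k$ roots in $\mathbb{F}_{q^n}$, which gives $\dim_{\mathbb{F}_q}\ker(f-v)\le k$ and therefore $\dim_{\mathbb{F}_q}(f-v)(\mathbb{F}_{q^n})\ge n-k$. Taking the minimum over admissible $v$ yields $d_R(\sigma_f,\mathcal{G})\ge n-k$, and combining with the covering-radius bound finishes the proof.

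The hard part is spotting the reduction to $\widetilde{L}$: a priori $f-v$ has $q$-degree $n-1$ and could have as many as $q^{n-1}$ roots, so a genuine idea is needed to cut the count down to $q^k$. The point is that, because we evaluate inside $\mathbb{F}_{q^n}$ where $x^{q^n}=x$, a single application of Frobenius ``inverts'' the leading term $x^{q^{n-1}}$ and trades it for $w$, whose $q$-degree is only $k-1$. A routine point to dispose of is the degenerate case $k=n$, where $\mathcal{G}=\mathbb{F}_{q^n}^n$, the covering radius is $0$, and the statement is trivial; for $1\le k\le n-1$ the argument above applies directly.
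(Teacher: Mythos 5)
Your proof is correct and is essentially the paper's own argument: both hinge on applying the Frobenius once and using $x^{q^n}=x$ (valid since $m=n$) to trade the leading term $x^{q^{n-1}}$ of $f-v$ for the $q$-degree-at-most-$k$ polynomial $x+(f_{\leq k-1}-v)^q(x)$. The paper packages this as the equality $\dim_{\mathbb{F}_q}<(f-v)(g_1),\ldots,(f-v)(g_n)>=\dim_{\mathbb{F}_q}<(f-v)^q(g_1),\ldots,(f-v)^q(g_n)>$ followed by the degree bound from the proof of Theorem \ref{thm: property}, whereas you equivalently bound the roots of $f-v$ in $\mathbb{F}_{q^n}$ by the degree $q^k$ of $\widetilde{L}$ and apply rank--nullity; the two estimates are the same.
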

\medskip
\begin{proof}
For any $h_{1}, h_{2}, \ldots, h_{n} \in \mathbb{F}_{q^n}$, it is easy to show that
\[\textnormal{dim}_{\mathbb{F}_q}<h_{1}, h_{2}, \cdots, h_n>=\textnormal{dim}_{\mathbb{F}_q}<h_{1}^q, h_{2}^q, \cdots, h_{n}^q>.\]
Thus we have
\begin{eqnarray*}
   & & d_R(\sigma_f, \mathcal{G})\\
   &=&\min\limits_{\deg_{q}(v) < k}\textnormal{rank}((f-v)(g_{1}),\cdots, (f-v)(g_{n}))\\
   &=&\min\limits_{\deg_{q}(v) < k}\textnormal{dim}_{\mathbb{F}_q}<(f-v)(g_{1}), \cdots, (f-v)(g_n)> \\
   &=&\min\limits_{\deg_{q}(v) < k}\textnormal{dim}_{\mathbb{F}_q}<(f-v)^q(g_{1}), \cdots, (f-v)^q(g_n)> \\
   &=&\min\limits_{\deg_{q}(v) < k}\textnormal{dim}_{\mathbb{F}_q}<g_{1}+(f_{\leq k-1}-v)^q(g_{1}), \cdots, g_n+(f_{\leq k-1}-v)^q(g_n)> \\
    &\geq& \min\limits_{\deg_{q}(v) < k}(n-\deg_{q}(x+(f_{\leq k-1}(x)-v(x))^{q}))\\
    &\geq& n-k.
\end{eqnarray*}
The fourth equality holds since $g_{i}^{q^{n}}=g_{i}$, and the first inequality follows from the process of the proof of Theorem 1. By the fact $d_R(\sigma_f, \mathcal{G}) \leq n-k$, we obtain that $d_R(\sigma_f, \mathcal{G}) = n-k$. Thus $\sigma_f$ is  a deep hole of $\mathcal{G}$.
\end{proof}

\medskip

In Proposition \ref{prop: ex1}, if the dimension $k$ equals to $n-2$, we can obtain more deep holes of the Gabidulin code as follows.
\begin{proposition} \label{prop: ex2}
Let $\mathcal{G}$ be the Gabidulin code over $\mathbb{F}_{q^n}$ with linearly independent set $\mathbf{g}=\{g_1,\ldots, g_n\}$ and dimension $k=n-2$.  Let $f(x)=x^{q^{n-1}}-ax^{q^{n-2}}+f_{\leq n-3}$, where $a$ is an element in $\mathbb{F}_{q^{n}}$ with $a \neq (-1)^{n-1}b^{1-q}$ for all $b \in \mathbb{F}^{*}_{q^{n}}$ and $f_{\leq n-3}$ is a $q$-linearized polynomial over $\mathbb{F}_{q^n}$ of $q$-degree less than or equals to $n-3$.
 Then $\sigma_f$ is  a deep hole of $\mathcal{G}$.
\end{proposition}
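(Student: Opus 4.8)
The plan is to reduce the claim to Lemma~\ref{lem: deg of k+1} and then derive a contradiction from the arithmetic condition imposed on $a$. Since $\deg_q(f)=n-1=k+1$ in this case, Theorem~\ref{thm: property} tells us that $\sigma_f$ fails to be a deep hole precisely when $d_R(\sigma_f,\mathcal{G})=n-(k+1)=1$, and Lemma~\ref{lem: deg of k+1} translates this into the existence of $k+1=n-1$ linearly independent elements $\beta_1,\dots,\beta_{n-1}$ in $<g_1,\dots,g_n>=\mathbb{F}_{q^n}$ with $a=\det(\mathcal{R}_{n-2}(\beta_1,\dots,\beta_{n-1}))/\det(M_{n-1}(\beta_1,\dots,\beta_{n-1}))$. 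So I would assume, for contradiction, that such $\beta_i$ exist, and I would aim to show that the resulting value of $a$ necessarily has the form $(-1)^{n-1}b^{1-q}$ for some $b\in\mathbb{F}_{q^n}^*$, contradicting the hypothesis on $a$.

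The key computation is to identify $\det(\mathcal{R}_{n-2}(\beta_1,\dots,\beta_{n-1}))/\det(M_{n-1}(\beta_1,\dots,\beta_{n-1}))$ explicitly. First I would note that an $(n-1)$-dimensional $\mathbb{F}_q$-subspace $H=<\beta_1,\dots,\beta_{n-1}>$ of $\mathbb{F}_{q^n}$ is a hyperplane, and its $q$-annihilator polynomial $\prod_{h\in H}(x-h)$ is a monic $q$-linearized polynomial of $q$-degree $n-1$, say $x^{q^{n-1}}-a_1 x^{q^{n-2}}+\cdots$, where by the discussion preceding Lemma~\ref{lem: deg of k+1} the coefficient $a_1$ equals exactly the ratio $\det(\mathcal{R}_{n-2})/\det(M_{n-1})$ that we care about. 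Now here is the structural point: the image $L(\mathbb{F}_{q^n})$ of the map $x\mapsto L(x):=\prod_{h\in H}(x-h)$ is a $1$-dimensional $\mathbb{F}_q$-subspace, hence equals $\mathbb{F}_q\cdot b$ for some $b\in\mathbb{F}_{q^n}^*$; equivalently, $L(x)^q$ and $L(x)$ have the same roots (namely $H$), so $L(x)^q = b^{q-1} L(x^q)$ — wait, more precisely, both $L(x)^q$ and the composition with Frobenius are $q$-linearized of $q$-degree $n-1$ with the same kernel $H$, and comparing leading coefficients and evaluating gives a relation of the form $L(x)^q = b^{q-1}\bigl(\text{Frobenius twist of }L\bigr)$. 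I would extract from this relation a formula expressing $a_1$ in terms of $b$; tracking the leading two coefficients through the Frobenius, the sub-leading coefficient becomes $a_1 = (-1)^{n-1} b^{1-q}$ (up to the sign bookkeeping coming from the $q$-annihilator normalization, which is where the $(-1)^{n-1}$ enters). This exhibits the forbidden form.

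The main obstacle I anticipate is making the coefficient identity completely precise: one must pin down how the sub-leading coefficient $a_1$ of the $q$-annihilator polynomial of a hyperplane $H$ relates to a generator $b$ of the (one-dimensional) image space, including the exact sign and the exact exponent $1-q$. The cleanest route is probably to use the Moore-determinant formula $\prod_{h\in H}(x-h)=\lambda\det(M_n(\beta_1,\dots,\beta_{n-1},x))$ together with the known expansion of that determinant (the formula for $h_i$ quoted from \cite[Lemma 3.51]{Nied}), and to observe that for a hyperplane the whole polynomial is, up to scalar, $\operatorname{Tr}$-like: it factors as $c\cdot(x^{q^{n-1}}\text{-part})$ so that applying Frobenius permutes $H$ and forces $L(x)^q = c^{q-1}L^{[1]}(x)$ where $L^{[1]}$ raises each coefficient to the $q$-th power. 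Comparing the coefficient of $x^{q^{n-1}}$ on both sides gives $1 = c^{q-1}\cdot(\text{leading coeff of }L)^q$, and comparing the next coefficient gives $-a_1^q = c^{q-1}(-a_1)^q$ once normalized — this forces $a_1$ into the orbit $\{(-1)^{n-1}b^{1-q}\}$. Once that identity is nailed down, the contradiction with the hypothesis $a\neq(-1)^{n-1}b^{1-q}$ is immediate, and combined with $d_R(\sigma_f,\mathcal{G})\le n-k$ from the covering-radius bound we conclude $d_R(\sigma_f,\mathcal{G})=n-k$, so $\sigma_f$ is a deep hole.
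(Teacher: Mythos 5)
Your overall skeleton matches the paper's: assume $\sigma_f$ is not a deep hole, invoke Lemma~\ref{lem: deg of k+1} to produce $n-1$ linearly independent $\beta_1,\dots,\beta_{n-1}$ with $a=\det(\mathcal{R}_{n-2}(\beta_1,\dots,\beta_{n-1}))/\det(M_{n-1}(\beta_1,\dots,\beta_{n-1}))$, and then show this ratio always lies in the excluded set. But the heart of the matter --- proving that this ratio (equivalently, the sub-leading coefficient of the $q$-annihilator of a hyperplane of $\mathbb{F}_{q^n}$) is always of the form $\pm b^{1-q}$ --- is exactly what your sketch does not establish. The structural relation you propose is wrong as written: $L(x)^q$ has $q$-degree $n$, not $n-1$, so it is not ``a $q$-linearized polynomial of $q$-degree $n-1$ with kernel $H$''; you would first have to reduce it modulo $x^{q^n}-x$ (this is precisely where the hypothesis $m=n$ enters) before comparing it with a scalar multiple of $L$. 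The two coefficient identities you then write down ($1=c^{q-1}(\text{leading coeff of }L)^q$ and $-a_1^q=c^{q-1}(-a_1)^q$) are incorrect and yield no constraint on $a_1$ at all (the second just forces $c^{q-1}=1$), so they cannot produce $a_1=(-1)^{n-1}b^{1-q}$. Worse, the one point you explicitly defer --- ``the sign bookkeeping'' --- is the only nontrivial content here: $\{b^{1-q}:b\in\mathbb{F}_{q^n}^*\}$ is the norm-one subgroup, and when $q$ and $n$ are both odd the factor $-1$ cannot be absorbed into $b^{1-q}$, so the sign decides which set of values of $a$ is excluded and cannot be waved away.

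For comparison, the paper settles the identity in two lines with no detour through the annihilator's image: raise every entry of $\mathcal{R}_{n-2}(\beta_1,\dots,\beta_{n-1})$ to the $q$-th power; since $\beta_i^{q^n}=\beta_i$, the resulting matrix is a cyclic row shift of $M_{n-1}(\beta_1,\dots,\beta_{n-1})$, hence $(\det\mathcal{R}_{n-2})^q=\pm\det M_{n-1}\neq 0$ and $a=\pm(\det\mathcal{R}_{n-2})^{1-q}$, contradicting the hypothesis on $a$. Your route can be repaired in the same spirit: reduce $L(x)^q$ modulo $x^{q^n}-x$, observe the reduction is $q$-linearized of $q$-degree $n-1$ vanishing on $H$, hence equals $c_{n-2}^q L(x)$ where $c_{n-2}=-a$, and iterate the resulting coefficient recursion to get $N_{\mathbb{F}_{q^n}/\mathbb{F}_q}(c_{n-2})=1$, i.e.\ $a=-b^{1-q}=(-1)^{n}b^{1-q}$. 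Note that a careful count in the paper's argument gives the same sign (the shift moves one row past $n-2$ others, so $(-1)^{n-2}=(-1)^{n}$, whereas the text prints $(-1)^{n-1}$), which is exactly the delicate case your deferred bookkeeping would have to settle. As it stands, the key identity in your proposal is asserted rather than proved, so there is a genuine gap.
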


\medskip
\begin{proof}
Suppose that $\sigma_f$ is not a deep hole of $\mathcal{G}$. By Lemma \ref{lem: deg of k+1}, there are $n-1$ linearly independent elements $\beta_1, \ldots, \beta_{n-1}$ in $<g_1,\cdots, g_n>$ such that
\begin{equation}\label{2}
  a=\frac{\det(\mathcal{R}_{n-2}(\beta_1, \cdots,\beta_{n-1}))}{\det(M_{n-1}(\beta_1,\cdots,\beta_{n-1}))}.
\end{equation}
For any matrix $A=(a_{ij})$, denote by $A^{(q)}$ the matrix $(a^{q}_{ij})$. Then
\[\mathcal{R}^{(q)}_{n-2}(\beta_1, \cdots,\beta_{n-1})=\left ( \begin{array}{cccc}
\beta_1^{q} & \beta_2^{q} & \ldots & \beta_{n-1}^{q} \\
\beta_{1}^{q^{2}} & \beta_{2}^{q^{2}} & \ldots & \beta_{n-1}^{q^{2}} \\
\vdots & \vdots & \ddots & \vdots \\
\beta_{1}^{q^{n-2}} & \beta_{2}^{q^{n-2}} & \ldots & \beta_{n-1}^{q^{n-2}}\\
\beta_{1}^{q^{n}} & \beta_{2}^{q^{n}} & \ldots & \beta_{n-1}^{q^{n}}
\end{array}
\right ).\]
Note that $\beta_{i}^{q^{n}}=\beta_{i}$. Thus
\[\mathcal{R}^{(q)}_{n-2}(\beta_1, \cdots,\beta_{n-1})=\left ( \begin{array}{cccc}
\beta_1^{q} & \beta_2^{q} & \ldots & \beta_{n-1}^{q} \\
\beta_{1}^{q^{2}} & \beta_{2}^{q^{2}} & \ldots & \beta_{n-1}^{q^{2}} \\
\vdots & \vdots & \ddots & \vdots \\
\beta_{1}^{q^{n-2}} & \beta_{2}^{q^{n-2}} & \ldots & \beta_{n-1}^{q^{n-2}}\\
\beta_{1} & \beta_{2} & \ldots & \beta_{n-1}
\end{array}
\right ),\]
and $\det(\mathcal{R}^{(q)}_{n-2}(\beta_1, \cdots,\beta_{n-1}))=(-1)^{n-1}\det(M_{n-1}(\beta_1,\cdots,\beta_{n-1})).$
It is easy to see that $\det(A^{(q)})=(\det(A))^{q}$, for any matrix $A$ over $\mathbb{F}_{q^n}$. Thus
\begin{eqnarray*}
(\det(\mathcal{R}_{n-2}(\beta_1, \cdots,\beta_{n-1})))^{q}&= & \det(\mathcal{R}^{(q)}_{n-2}(\beta_1, \cdots,\beta_{n-1}))\\
   &= & (-1)^{n-1}\det(M_{n-1}(\beta_1,\cdots,\beta_{n-1}))\neq 0.
   \end{eqnarray*}
i.e., $\det(\mathcal{R}_{n-2}(\beta_1, \cdots,\beta_{n-1})) \neq 0$. Moreover, by Eq. (2), we have
\[a=(-1)^{n-1}(\det(\mathcal{R}_{n-2}(\beta_1, \cdots,\beta_{n-1})))^{1-q},\]
which contradicts to the assumption of $a$. Thus $\sigma_f$ is a deep hole of $\mathcal{G}$.
\end{proof}

\medskip

\begin{remark} When $a=0$,  the result in Proposition \ref{prop: ex2} can be obtained by Proposition \ref{prop: ex1}.
\end{remark}

The following proposition considers the case of Gabidulin codes with dimension $k=1$.
\begin{proposition} \label{prop: ex3}
Suppose $m$ is odd and $3\leq n\leq m$. Let $\mathcal{G}$ be the Gabidulin code with linearly independent set $\mathbf{g}=\{g_1,\ldots, g_n\}$ and dimension $k=1$.   Let $f(x)=x^{q^2}+cx$ where $ c\in \mathbb{F}_{q^m}$. Then  $\sigma_f$ is  a deep hole of $\mathcal{G}$.
\end{proposition}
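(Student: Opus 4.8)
The plan is to apply Lemma \ref{lem: deg of k+1}, which is available here because $\deg_q(f)=2=k+1$ and, since $n\geq 3$, we have $2<n$, so $f$ falls in the range $k\leq\deg_q(f)<n$ required by Theorem \ref{thm: property}. Writing $f(x)=x^{q^{2}}-a_{1}x^{q}+\cdots$ in the notation of that lemma, we read off $a_{1}=0$. Hence $\sigma_f$ fails to be a deep hole of $\mathcal{G}$ if and only if there exist $\mathbb{F}_q$-linearly independent $\beta_1,\beta_2\in\langle\mathbf{g}\rangle$ with
\[
0 \;=\; a_{1} \;=\; \frac{\det(\mathcal{R}_{1}(\beta_1,\beta_2))}{\det(M_{2}(\beta_1,\beta_2))},
\]
i.e.\ with $\det(\mathcal{R}_{1}(\beta_1,\beta_2))=0$. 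Since $\mathcal{R}_{1}(\beta_1,\beta_2)$ is $M_{3}(\beta_1,\beta_2)$ with the row $(\beta_1^{q},\beta_2^{q})$ deleted, one computes $\det(\mathcal{R}_{1}(\beta_1,\beta_2))=\beta_1\beta_2^{q^{2}}-\beta_2\beta_1^{q^{2}}$. So the entire statement reduces to showing that $\beta_1\beta_2^{q^{2}}=\beta_2\beta_1^{q^{2}}$ is impossible for $\mathbb{F}_q$-linearly independent $\beta_1,\beta_2\in\mathbb{F}_{q^{m}}$.

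For that step I would argue as follows. Linear independence forces $\beta_1,\beta_2\neq 0$, so $\beta_1\beta_2^{q^{2}}=\beta_2\beta_1^{q^{2}}$ is equivalent to $(\beta_1\beta_2^{-1})^{q^{2}}=\beta_1\beta_2^{-1}$, i.e.\ $\beta_1\beta_2^{-1}\in\mathbb{F}_{q^{2}}$. But also $\beta_1\beta_2^{-1}\in\mathbb{F}_{q^{m}}$, and $\mathbb{F}_{q^{2}}\cap\mathbb{F}_{q^{m}}=\mathbb{F}_{q^{\gcd(2,m)}}=\mathbb{F}_q$ because $m$ is odd. Hence $\beta_1\beta_2^{-1}\in\mathbb{F}_q^{*}$, i.e.\ $\beta_1$ and $\beta_2$ are $\mathbb{F}_q$-proportional, contradicting their linear independence. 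Therefore no such pair exists, so $\sigma_f$ is a deep hole of $\mathcal{G}$. (Equivalently: by Theorem \ref{thm: property}, $d_R(\sigma_f,\mathcal{G})\neq n-2$; combined with $d_R(\sigma_f,\mathcal{G})\geq n-\deg_q(f)=n-2$ and the covering radius being $n-k=n-1$, this forces $d_R(\sigma_f,\mathcal{G})=n-1$.)

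I do not expect a genuine obstacle; the argument is short. The only delicate point is the bookkeeping for $\mathcal{R}_1(\beta_1,\beta_2)$ — one should confirm that it is the $q^{1}$-row that gets deleted, so that the surviving $2\times 2$ minor has rows indexed by the exponents $q^{0}$ and $q^{2}$; the sign of the minor is irrelevant since only its vanishing matters. As a sanity check one can also avoid Lemma \ref{lem: deg of k+1} altogether and argue directly that $d_R(\sigma_f,\mathcal{G})=\min_{b\in\mathbb{F}_{q^{m}}}\dim_{\mathbb{F}_q}\langle u_b(g_1),\ldots,u_b(g_n)\rangle$ with $u_b(x)=x^{q^{2}}+bx$; using the map $\pi_{u_b}$ from the proof of Theorem \ref{thm: property}, a $2$-dimensional kernel would require $\mathrm{Root}(u_b)$ to be a $2$-dimensional $\mathbb{F}_q$-subspace, but any two nonzero roots of $x^{q^{2}}+bx$ satisfy $y^{q^{2}-1}=-b$, so their ratio lies in $\mathbb{F}_{q^{2}}\cap\mathbb{F}_{q^{m}}=\mathbb{F}_q$, forcing $\dim_{\mathbb{F}_q}\mathrm{Root}(u_b)\leq 1$ and hence $d_R(\sigma_f,\mathcal{G})\geq n-1$.
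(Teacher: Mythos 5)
Your proposal is correct and follows essentially the same route as the paper: apply Lemma \ref{lem: deg of k+1} with $a_1=0$, compute $\det(\mathcal{R}_1(\beta_1,\beta_2))=\beta_1\beta_2^{q^2}-\beta_2\beta_1^{q^2}$, and rule out linearly independent $\beta_1,\beta_2$ with $(\beta_2\beta_1^{-1})^{q^2-1}=1$ using the oddness of $m$ (the paper phrases this via $\gcd(q^2-1,q^m-1)=q-1$, you via $\mathbb{F}_{q^2}\cap\mathbb{F}_{q^m}=\mathbb{F}_q$, which is the same fact).
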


\medskip

\begin{proof}
Suppose that $\sigma_f$ is not a deep hole of $\mathcal{G}$. By Lemma  \ref{lem: deg of k+1}, there are two linearly independent elements $\beta_1$ and $\beta_2$ in $<g_1,\cdots, g_n>$ such that
$b=0=\beta_1\beta_2(\beta_{2}^{q^2-1}-\beta_{1}^{q^2-1})$.
Thus, $(\beta_2\beta^{-1}_{1})^{q^2-1}=1$.  Since $m$ is odd, $\gcd(q^2-1,q^{m}-1)=q-1$.  So we have $(\beta_2\beta^{-1}_{1})^{q-1}=1$, which implies that $\frac{\beta_2}{\beta_1} \in \mathbb{F}_{q}$, i.e., $\beta_1$ and $\beta_2$ are linearly dependent over $\mathbb{F}_{q}$.  This contradicts with the assumption of $\beta_1$ and $\beta_2$.
\end{proof}

\begin{remark} When $n=m=3$, the result in Proposition \ref{prop: ex3} is included in Proposition \ref{prop: ex2}.
\end{remark}

\medskip
Propositions \ref{prop: ex1}, \ref{prop: ex2} and \ref{prop: ex3} still hold for the Hamming metric after similar analysis.

\medskip
In the rest of this section  we furthermore discuss the distance of a special class of words to the Gabidulin codes over $\mathbb{F}_{2^{m}}$ with dimension $k=1$.  Before that, we give two lemmas.

\medskip

\begin{lemma}\label{lem:trace} \cite{Nied}
Let $a$ be in a finite field $\mathbb{F}_q$ and $p$ be the characteristic of $\mathbb{F}_q$. Then  the trinomial $x^{p}-x-a$ is irreducible in $\mathbb{F}_q[x]$ if and only if
Tr$_{\mathbb{F}_q/\mathbb{F}_p}(a)\neq 0$.
\end{lemma}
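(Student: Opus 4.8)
The plan is to prove this classical Artin--Schreier criterion in two stages: first establish the dichotomy that $f(x)=x^{p}-x-a$ is either irreducible over $\mathbb{F}_q$ or splits completely over $\mathbb{F}_q$, and then identify the set of $a$ for which $f$ has a root in $\mathbb{F}_q$ with the kernel of the trace.

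First I would observe that if $\alpha$ is a root of $f$ in some extension of $\mathbb{F}_q$, then for every $c\in\mathbb{F}_p$ the element $\alpha+c$ is also a root, since $(\alpha+c)^{p}-(\alpha+c)-a=\alpha^{p}-\alpha-a+(c^{p}-c)=0$. These $p$ elements are distinct, so $f$ has exactly the roots $\{\alpha+c : c\in\mathbb{F}_p\}$, all lying in $\mathbb{F}_q(\alpha)$. Hence $[\mathbb{F}_q(\alpha):\mathbb{F}_q]$ divides $\deg f=p$, so it equals $1$ or $p$. If it equals $1$, then $\alpha\in\mathbb{F}_q$ and $f$ splits completely over $\mathbb{F}_q$; if it equals $p$, then the minimal polynomial of $\alpha$ over $\mathbb{F}_q$ has degree $p$ and must coincide with the monic polynomial $f$, so $f$ is irreducible. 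Therefore $f$ is irreducible over $\mathbb{F}_q$ if and only if it has no root in $\mathbb{F}_q$.

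Next I would analyse the Artin--Schreier operator $\wp\colon\mathbb{F}_q\to\mathbb{F}_q$, $\wp(x)=x^{p}-x$. It is $\mathbb{F}_p$-linear, and its kernel is the set of roots of $x^{p}-x$, namely $\mathbb{F}_p$, which has $p$ elements; writing $q=p^{e}$ this gives $|\operatorname{im}(\wp)|=q/p=p^{e-1}$. On the other hand $\mathrm{Tr}_{\mathbb{F}_q/\mathbb{F}_p}$ is $\mathbb{F}_p$-linear and nonzero (the polynomial $x+x^{p}+\cdots+x^{p^{e-1}}$ has degree $p^{e-1}<q$, hence cannot vanish on all of $\mathbb{F}_q$), so it is surjective onto $\mathbb{F}_p$ and its kernel has $p^{e-1}$ elements. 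Since the $p$-th power map cyclically permutes the conjugates $x,x^{p},\ldots,x^{p^{e-1}}$, we have $\mathrm{Tr}_{\mathbb{F}_q/\mathbb{F}_p}(x^{p})=\mathrm{Tr}_{\mathbb{F}_q/\mathbb{F}_p}(x)$, hence $\mathrm{Tr}_{\mathbb{F}_q/\mathbb{F}_p}(\wp(x))=0$ for all $x$; this shows $\operatorname{im}(\wp)\subseteq\ker(\mathrm{Tr}_{\mathbb{F}_q/\mathbb{F}_p})$, and comparing cardinalities forces equality.

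Finally I would combine the two parts: $f$ has a root in $\mathbb{F}_q$ if and only if $a\in\operatorname{im}(\wp)$, which by the previous paragraph holds if and only if $\mathrm{Tr}_{\mathbb{F}_q/\mathbb{F}_p}(a)=0$; together with the dichotomy this yields that $f$ is irreducible over $\mathbb{F}_q$ precisely when $\mathrm{Tr}_{\mathbb{F}_q/\mathbb{F}_p}(a)\neq 0$. The only step demanding real care is the equality $\operatorname{im}(\wp)=\ker(\mathrm{Tr}_{\mathbb{F}_q/\mathbb{F}_p})$: one must verify that both are $\mathbb{F}_p$-subspaces of $\mathbb{F}_q$ of the same dimension $e-1$ and that the obvious inclusion runs from $\operatorname{im}(\wp)$ into $\ker(\mathrm{Tr}_{\mathbb{F}_q/\mathbb{F}_p})$; the remaining manipulations are straightforward identities in characteristic $p$.
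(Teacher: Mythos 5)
This lemma is quoted in the paper from Lidl--Niederreiter \cite{Nied} without proof, so there is no internal argument to compare against; your proposal supplies the standard Artin--Schreier proof, which is essentially the textbook one: (i) the root-set dichotomy (either a root lies in $\mathbb{F}_q$ and then $x^p-x-a$ splits, or it is irreducible), and (ii) the identification $\operatorname{im}(\wp)=\ker(\mathrm{Tr}_{\mathbb{F}_q/\mathbb{F}_p})$ via the inclusion plus a counting argument, i.e.\ the additive Hilbert~90 statement for finite fields. Both halves are correct as written. The only place where your wording is quicker than the logic requires is the step ``all roots lie in $\mathbb{F}_q(\alpha)$, hence $[\mathbb{F}_q(\alpha):\mathbb{F}_q]$ divides $p$'': containment of the roots in $\mathbb{F}_q(\alpha)$ alone does not give divisibility. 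The fix is one sentence: since each root $\alpha+c$ with $c\in\mathbb{F}_p\subseteq\mathbb{F}_q$ generates the same field $\mathbb{F}_q(\alpha)$, every irreducible factor of the (separable) polynomial $f$ over $\mathbb{F}_q$ has the same degree $d=[\mathbb{F}_q(\alpha):\mathbb{F}_q]$, so $d$ divides $p$; equivalently, one can note that the Frobenius $x\mapsto x^q$ sends $\alpha$ to $\alpha+c$ for some $c\in\mathbb{F}_p$, so its orbit on $\alpha$ has length $1$ or $p$. With that sentence added, your proof is complete and matches the cited source's approach.
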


\medskip

For  a finite field $\mathbb{F}_q$,  the integer-valued function $v$ on $\mathbb{F}_q$ is defined by $v(b)=-1$ for $b\in \mathbb{F}^*_{q}$ and $v(0)=q-1$.

\medskip

\begin{lemma}\label{lem:solutions} \cite{Nied}
For even $q$,   let $a\in \mathbb{F}_q$ with tr$_{\mathbb{F}_q}(a)=1$ and $b\in \mathbb{F}_q$, then
the number of solutions of the equation $x^2_1+x_1x_2+ax^2_{2}=b$ is $q-v(b)$.
\end{lemma}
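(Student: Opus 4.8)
The plan is to identify the binary quadratic form $Q(x_1,x_2) = x_1^2 + x_1 x_2 + a x_2^2$ with the norm form of the quadratic extension $\mathbb{F}_{q^2}/\mathbb{F}_q$ and then reduce the count to counting fibres of the norm map. First I would apply Lemma \ref{lem:trace} with $p = 2$: the hypothesis $\textnormal{Tr}_{\mathbb{F}_q/\mathbb{F}_2}(a) = 1 \neq 0$ says precisely that $t^2 + t + a$ is irreducible over $\mathbb{F}_q$. Fix a root $\theta$; then $\mathbb{F}_{q^2} = \mathbb{F}_q(\theta)$ with $\mathbb{F}_q$-basis $\{1, \theta\}$, and since in characteristic $2$ one checks directly that $\theta + 1$ is the other root, the Frobenius $z \mapsto z^q$ must send $\theta$ to $\theta + 1$ (it cannot fix $\theta$, as $\theta \notin \mathbb{F}_q$).

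Next I would carry out the norm computation. Writing $z = x_1 + x_2\theta$ with $x_1, x_2 \in \mathbb{F}_q$ and using $\theta^q = \theta + 1$, $\theta^2 + \theta = a$, and that the characteristic is $2$,
\[
N_{\mathbb{F}_{q^2}/\mathbb{F}_q}(z) = z\,z^q = (x_1 + x_2\theta)(x_1 + x_2\theta + x_2) = x_1^2 + x_1 x_2 + x_2^2(\theta^2 + \theta) = Q(x_1, x_2).
\]
Hence $(x_1, x_2) \mapsto x_1 + x_2\theta$ is a bijection from the solution set of $Q(x_1,x_2) = b$ onto $\{z \in \mathbb{F}_{q^2} : N_{\mathbb{F}_{q^2}/\mathbb{F}_q}(z) = b\}$. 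Since $N \colon \mathbb{F}_{q^2}^* \to \mathbb{F}_q^*$ is a surjective group homomorphism whose kernel has order $(q^2 - 1)/(q - 1) = q + 1$, each value $b \in \mathbb{F}_q^*$ is attained exactly $q + 1$ times, while $N(z) = 0$ forces $z = 0$. Therefore the equation has $q + 1 = q - (-1) = q - v(b)$ solutions when $b \in \mathbb{F}_q^*$ and exactly $1 = q - (q - 1) = q - v(0)$ solution when $b = 0$, which is the asserted count in both cases.

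I do not expect a genuine obstacle here; the only places needing a little care are the identity $\theta^q = \theta + 1$ (which uses the irreducibility just obtained) and the characteristic-$2$ cancellation in the norm computation. If one prefers to avoid $\mathbb{F}_{q^2}$, an equivalent elementary route is available: fix $x_2$; for $x_2 = 0$ the equation becomes $x_1^2 = b$, with exactly one solution; for $x_2 \neq 0$, substituting $u = x_1/x_2$ gives $u^2 + u = a + b/x_2^2$, which by Lemma \ref{lem:trace} has two solutions if $\textnormal{Tr}_{\mathbb{F}_q/\mathbb{F}_2}(a + b/x_2^2) = 0$ and none otherwise, i.e. two solutions exactly when $\textnormal{Tr}_{\mathbb{F}_q/\mathbb{F}_2}(b/x_2^2) = 1$. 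Counting the $x_2 \in \mathbb{F}_q^*$ with this trace condition — there are $q/2$ of them when $b \neq 0$ and none when $b = 0$ — and adding the $x_2 = 0$ term yields $q + 1$ solutions when $b \neq 0$ and $1$ when $b = 0$, again agreeing with $q - v(b)$.
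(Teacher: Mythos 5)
Your proof is correct. Note that the paper does not prove this lemma at all: it is quoted verbatim from Lidl--Niederreiter \cite{Nied}, where it is a special case of the general theory of quadratic forms over finite fields of even characteristic, so there is no in-paper argument to compare against. Your route is the standard clean one and is sound in every step: the hypothesis $\mathrm{tr}_{\mathbb{F}_q}(a)=1$ gives irreducibility of $t^2+t+a$ via the Artin--Schreier criterion (the paper's Lemma 6 with $p=2$), the identity $\theta^q=\theta+1$ and the computation $z z^q=x_1^2+x_1x_2+ax_2^2$ identify the form with the norm form of $\mathbb{F}_{q^2}/\mathbb{F}_q$, and the fibre count of the norm map (kernel of order $q+1$ on nonzero elements, $N(z)=0$ only for $z=0$) gives exactly $q+1$ solutions for $b\neq 0$ and $1$ solution for $b=0$, i.e.\ $q-v(b)$ in both cases. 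Your alternative elementary count (splitting on $x_2=0$ versus $x_2\neq 0$, substituting $u=x_1/x_2$, and counting the $q/2$ values of $x_2$ with $\mathrm{Tr}(b/x_2^2)=1$ when $b\neq0$) is also correct and has the advantage of staying inside $\mathbb{F}_q$ and reusing only Lemma 6; the norm-form argument buys a more conceptual explanation of why the answer is $q+1$, namely that the solution set for $b\neq0$ is a coset of the norm-one subgroup.
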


\medskip

We now consider  the finite field $\mathbb{F}_{2^m}$.   Let
\[h(x_1,x_2)=x^2_1+x_1x_2+x^2_{2}. \]
For any $b\in \mathbb{F}_{2^m}$,  let the  set
\begin{equation*}
S(h(x_1,x_2)=b)= \{(c_1, c_2)\in \mathbb{F}_{2^m}\times \mathbb{F}_{2^m}|
h(c_1,c_2)=b,
c_1\neq c_2, c_i\neq 0, i=1, 2\}
\end{equation*}
and  $N(h(x_1,x_2)=b)=|S(h(x_1,x_2)=b)|$.

We consider two cases:

\medskip
 \noindent Case 1:  $m$ is odd, which implies that  Tr$_{2^m}(1)=1$.

   If $b=0$,   the number of solutions of the equation $h(x_1, x_2)=b$ is $2^m-v(b)=1$ by Lemma \ref{lem:solutions}.  Since $(0,0)$ is a solution,   $S(h(x_1, x_2)=b)=\varnothing$ and $N(h(x_1,x_2)=b)=0$.

If $b\neq 0$,  the number of solutions of the equation $h(x_1, x_2)=b$ is $2^m+1$ by Lemma \ref{lem:solutions}.  Thus,   $N(h(x_1,x_2)=b)=2^m+1-2$ since any element in $\mathbb{F}_{2^m}$ is a square.  We also obtain the corresponding $S(h(x_1,x_2)=b)$.

\medskip
\noindent Case 2: $m$ is even, which implies that  Tr$_{2^m}(1)=0$.

By Lemma \ref{lem:trace},  $x^2+x+1$ is reducible over $\mathbb{F}_{2^m}$  and so it can be written as  $x^2+x+1=(x+\alpha)(x+\beta)$ where $\alpha$, $\beta\in \mathbb{F}_{2^m}$, $\alpha\neq 1$, $\beta\neq 1$ and $\alpha\neq \beta$.  Thus,  $x^2_1+x_1x_2+x^2_{2}=(x_1+\alpha x_2)(x_1+\beta x_2)=b$ and so the number of solutions of $h(x_1, x_2)=b$ is $2^m+2^m-1$ if $b=0$ or  $2^m-1$ if $b\neq 0$.

 If $b=0$, then  $N(h(x_1,x_2)=b)=2^{m+1}-2$ and also we get $S(h(x_1,x_2)=b)$.

If $b\neq 0$,  $N(h(x_1,x_2)=b)=2^m-1-2$ since any element in $\mathbb{F}_{2^m}$ is a square.
We also get $S(h(x_1,x_2)=b)$.

From the above discussion, we get the following result.

\medskip
\begin{proposition} \label{prop: binary}
Let $\mathcal{G}$ be the Gabidulin code over $\mathbb{F}_{2^m}$ with $\mathbf{g}=\{g_1,\ldots, g_n\}$, dimension $k=1$ and $3\leq n\leq m$.   Let $f(x)=x^4+b x^2+cx$, where $b, c\in \mathbb{F}_{2^m}$.  Then $\sigma_f$ is not a deep hole of $\mathcal{G}$ if and only if there are two elements $\beta_{1}$ and $\beta_{2}$ in $<g_{1}, \cdots, g_{n} >$ such that $(\beta_{1}, \beta_{2}) \in S(h(x_1,x_2)=b)$. In particular, if $n=m$,  then $\sigma_f$ is  a deep hole of $\mathcal{G}$ if and only if $b=0$ and $m$ is odd.
\end{proposition}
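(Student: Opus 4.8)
The plan is to apply Lemma \ref{lem: deg of k+1} with $q=2$ and $k=1$, reduce the ``not a deep hole'' condition to a single equation in two field elements, and recognize that equation as $h(\beta_1,\beta_2)=b$; for the last statement one then specializes to $<\mathbf{g}>=\mathbb{F}_{2^m}$ and invokes the case analysis preceding the proposition.

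First, since $3\le n$ we have $\deg_q(f)=2<n$, so $\sigma_f$ is a genuine representative and the hypotheses $k\le\deg_q(f)<n$ of Theorem \ref{thm: property} (hence of Lemma \ref{lem: deg of k+1}) hold. Writing $f(x)=x^{q^{k+1}}-a_1x^{q^k}+\cdots$ with $q=2$, $k=1$, the coefficient of $x^2$ gives $a_1=b$ in characteristic $2$. By Lemma \ref{lem: deg of k+1}, $\sigma_f$ is \emph{not} a deep hole of $\mathcal{G}$ if and only if there exist $\mathbb{F}_2$-linearly independent $\beta_1,\beta_2\in<\mathbf{g}>$ with
\[
b=\frac{\det(\mathcal{R}_1(\beta_1,\beta_2))}{\det(M_2(\beta_1,\beta_2))}.
\]

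Second, I would compute this ratio explicitly. Since $q=2$, $\det(M_2(\beta_1,\beta_2))=\beta_1\beta_2^2+\beta_2\beta_1^2=\beta_1\beta_2(\beta_1+\beta_2)$ and $\det(\mathcal{R}_1(\beta_1,\beta_2))=\beta_1\beta_2^4+\beta_2\beta_1^4=\beta_1\beta_2(\beta_1^3+\beta_2^3)$. As $\beta_1,\beta_2$ are linearly independent over $\mathbb{F}_2$ they are nonzero and distinct, so $\beta_1\beta_2(\beta_1+\beta_2)\neq 0$ and we may cancel it; using $\beta_1^3+\beta_2^3=(\beta_1+\beta_2)(\beta_1^2+\beta_1\beta_2+\beta_2^2)$ the ratio equals $\beta_1^2+\beta_1\beta_2+\beta_2^2=h(\beta_1,\beta_2)$. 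Hence the condition becomes: there exist $\mathbb{F}_2$-linearly independent $\beta_1,\beta_2\in<\mathbf{g}>$ with $h(\beta_1,\beta_2)=b$. Now over $\mathbb{F}_2$ a pair $\{\beta_1,\beta_2\}$ is linearly independent precisely when $\beta_1\neq0$, $\beta_2\neq0$ and $\beta_1\neq\beta_2$, which is exactly the constraint built into membership $(\beta_1,\beta_2)\in S(h(x_1,x_2)=b)$. This proves the first equivalence.

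Third, for $n=m$ the set $\{g_1,\ldots,g_m\}$ is an $\mathbb{F}_2$-basis of $\mathbb{F}_{2^m}$, so $<\mathbf{g}>=\mathbb{F}_{2^m}$ and the criterion above simplifies to $S(h(x_1,x_2)=b)\neq\varnothing$, i.e.\ $N(h(x_1,x_2)=b)\geq1$. By the case analysis preceding the proposition (via Lemmas \ref{lem:trace} and \ref{lem:solutions}): when $m$ is odd, $N(h(x_1,x_2)=0)=0$ while $N(h(x_1,x_2)=b)\geq1$ for $b\neq0$; when $m$ is even, $N(h(x_1,x_2)=0)=2^{m+1}-2\geq1$ and $N(h(x_1,x_2)=b)\geq1$ for $b\neq0$ (here one uses $m\geq4$, which follows from $3\le n\le m$ and $m$ even). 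Thus $N(h(x_1,x_2)=b)=0$ if and only if $m$ is odd and $b=0$, and therefore $\sigma_f$ is a deep hole of $\mathcal{G}$ if and only if $b=0$ and $m$ is odd.

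The only genuine computation here is the evaluation of the determinant ratio and its identification with the quadratic form $h$; the main point requiring care is the exact matching between ``$\mathbb{F}_2$-linear independence of $\{\beta_1,\beta_2\}$'' and the conditions $\beta_i\neq0$, $\beta_1\neq\beta_2$ defining $S(h(x_1,x_2)=b)$, after which the statement follows by bookkeeping together with the already-established count of $N(h(x_1,x_2)=b)$.
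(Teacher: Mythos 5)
Your proposal is correct and follows essentially the same route as the paper: apply Lemma \ref{lem: deg of k+1} with $q=2$, $k=1$, identify the determinant ratio with $\beta_1^2+\beta_1\beta_2+\beta_2^2=h(\beta_1,\beta_2)$, match $\mathbb{F}_2$-linear independence with ``distinct and nonzero'' so the condition becomes $(\beta_1,\beta_2)\in S(h(x_1,x_2)=b)$, and for $n=m$ invoke the case analysis of $N(h(x_1,x_2)=b)$ preceding the proposition. The only difference is that you write out the $2\times 2$ determinant computation explicitly, which the paper leaves implicit; everything else matches.
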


\medskip

\begin{proof}  Note that two nonzero elements $\beta_{1}$ and $\beta_{2}$ are linearly independent over $\mathbb{F}_{2}$ if and only if $\beta_{1} \neq \beta_{2}$. Thus, by  Lemma   \ref{lem: deg of k+1}, $\sigma_f$ is not a deep hole of $\mathcal{G}$ if and only if there are  two distinct nonzero elements $\beta_{1}$ and $\beta_{2}$ in $<g_{1}, \cdots, g_{n} >$ such that
$$b=\beta^2_{1}+\beta_{1}\beta_{2}+\beta^2_{2},$$
          i.e., $(\beta_{1}, \beta_{2}) \in S(h(x_1,x_2)=b).$
In particular, if $n=m$, then $<g_{1}, \cdots, g_{n} >=\mathbb{F}_{2^m}$. By the above discussion, $\sigma_{f}$ is a deep hole only when $b=0$ and $m$ is odd. For the other cases, $N(h(x_1,x_2)=b)$ is at least $1$.  Therefore, the desired result is obtained.
\end{proof}

\begin{remark}  The second result of Proposition \ref{prop: binary} may not hold for the case of Hamming metric from Lemma \ref{lem: deg of k+1Ham} since it is possible that  $h(x_1,x_2)=b$ has no solutions in $\{g_1,\cdots,g_n\}$ when $b\neq 0$ although $h(x_1,x_2)=b$ always has solutions in $<g_{1}, \cdots, g_{n} >=\mathbb{F}_{2^m}$.
\end{remark}

\section{Conclusions}
In this paper, we study  deep holes of Gabidulin codes in both Hamming metric and rank metric. The general results for Hamming metric case (see Theorem \ref{thm: propertyH} and Lemma \ref{lem: deg of k+1Ham}) depend on the choice of the set $\{g_1, \ldots, g_n \}$, while the results for rank metric case (see Theorem \ref{thm: property} and Lemma \ref{lem: deg of k+1}) only depend on the subspace of $\mathbb{F}_{q^{m}}$ spanned by $g_1, \ldots, g_n$. In particular, when $n=m$, the latter  does not depend on the choice of $g_1, \ldots, g_n$ since $ < g_1, \ldots, g_n  >$ equals to the whole space $\mathbb{F}_{q^{m}}$.  Hence the problem about deep holes of Gabidulin codes in Hamming metric seems more complicated than in  rank metric.

On the other hand, for generalized Reed-Solomon codes,  it has been proved that the problem of determining if a received word is a deep hole is NP-hard \cite{Vardy}.   For Gabidulin codes, the problem  seems more complicated although we give a necessary and sufficient condition for this problem.  So we state it as  a conjecture.

\begin{conjecture}
Deciding deep holes of the Gabidulin code is NP-hard.
\end{conjecture}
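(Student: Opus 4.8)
The plan is to follow the template that establishes $\mathsf{NP}$-hardness for the Reed--Solomon analogue \cite{Vardy} and to transport it to the $q$-linearized setting by means of the characterizations proved above. The first step is to recast the decision problem. By Theorem \ref{thm: property} (for the case $d=1$, Lemma \ref{lem: deg of k+1}; in general via the coefficient description preceding it), a monic word $\sigma_f$ with $\deg_q(f)=k+d$ fails to be a deep hole precisely when there is a $(k+d)$-dimensional subspace $H\subseteq \langle\mathbf{g}\rangle$ whose $q$-annihilator polynomial agrees with $f$ in its top $d$ coefficients $a_1,\ldots,a_d$; equivalently, writing $h_i=\det(\mathcal{R}_{k+d-i})/\det(M_{k+d})$ for the Moore-determinant invariants of a basis $\beta_1,\ldots,\beta_{k+d}$ of $H$, the system $a_i=h_i$ ($1\le i\le d$) is solvable over the subspaces of $\langle\mathbf{g}\rangle$. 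In particular the complement problem --- deciding that $\sigma_f$ is \emph{not} a deep hole --- lies in $\mathsf{NP}$: a witness is such a basis, and since the $h_i$ are ratios of determinants computable in polynomial time, the equalities $a_i=h_i$ are checkable efficiently. Hence deciding deep holes is in $\mathsf{coNP}$, and the content of the conjecture is that the associated subspace-realization problem is $\mathsf{NP}$-hard (so that deciding deep holes is hard up to complementation, exactly as in the Reed--Solomon case).

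The second step is to exhibit a polynomial-time reduction from a known $\mathsf{NP}$-complete problem into this subspace-realization problem. The natural source is a $q$-linear analogue of subset sum, or more robustly the solvability of a system of polynomial equations over a finite field, which is $\mathsf{NP}$-hard. I would encode an arbitrary instance of such a constraint problem as a prescribed target tuple $(a_1,\ldots,a_d)$ together with a carefully chosen ambient subspace $V=\langle\mathbf{g}\rangle$ of $\mathbb{F}_{q^m}$, so that feasible assignments of the source instance correspond to subspaces $H\subseteq V$ realizing $h_i(H)=a_i$. The invariants $h_i$ are, up to sign, the $q$-analogues of the elementary symmetric functions of $H$, namely the Dickson-type invariants attached to its annihilator polynomial, and the flexibility of the encoding comes from choosing $d$ (how many coefficients are constrained) and $V$ (which elements are available to span $H$). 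For the cleanest gadget I would first treat $n=m$, where $V=\mathbb{F}_{q^m}$ and the only data are $(a_1,\ldots,a_d)$, reducing the question to whether $\mathbb{F}_{q^m}$ contains a $(k+d)$-subspace with prescribed top annihilator coefficients; this is a purely structural question about Dickson invariants for which I would establish hardness by embedding the target polynomial system into the realizability of the tuple.

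The third step is to verify the reduction in both directions and to account for its probabilistic character. As with the Reed--Solomon argument, I expect the reduction to be randomized: producing the field elements and basis vectors $\beta_i$ that encode a given constraint in general position requires sampling elements of $\mathbb{F}_{q^m}$ that avoid a bounded number of bad algebraic conditions, which succeeds with high probability but not obviously deterministically. The completeness direction (a satisfying assignment yields a realizing subspace $H$, hence $\sigma_f$ is not a deep hole) should reduce to a direct construction, while the soundness direction (any realizing $H$ decodes back to a satisfying assignment) is where the determinant identities must be controlled so that no spurious subspace can accidentally meet the target tuple.

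The hard part will be the rigidity of the invariants $h_i$: unlike the coefficients of an ordinary polynomial, the Moore-determinant ratios attached to a subspace are far from independent, and forcing them to realize an arbitrary tuple --- rather than a tuple already lying on the Dickson-invariant variety --- is exactly the obstruction to a clean combinatorial encoding. Overcoming it requires either a gadget that confines attention to a slice of subspaces on which the map $H\mapsto(h_1(H),\ldots,h_d(H))$ is rich enough to simulate the source problem, or a transfer principle showing that realizability of the target coefficients is itself $\mathsf{NP}$-hard. A secondary difficulty is the $\mathsf{coNP}$ framing noted above: a fully satisfactory resolution should make precise whether one obtains $\mathsf{NP}$-hardness of the non-deep-hole problem (and hence $\mathsf{coNP}$-hardness of deciding deep holes), or genuine $\mathsf{NP}$-hardness under a suitable parameterization, paralleling the status of the Reed--Solomon result \cite{Vardy}.
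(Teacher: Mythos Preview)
The statement you are addressing is a \emph{conjecture} in the paper, not a theorem: the authors explicitly leave it open and offer no proof whatsoever. There is therefore no ``paper's own proof'' to compare against, and your task reduces to whether your proposal actually establishes the claim on its own.

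It does not. What you have written is a research outline, not a proof, and you acknowledge as much. The decisive gap is in your second and third steps: you announce that you ``would encode an arbitrary instance'' of an $\mathsf{NP}$-hard problem as a target tuple $(a_1,\ldots,a_d)$ together with an ambient subspace $V$, but you never construct the encoding, never specify which $\mathsf{NP}$-hard problem is the source, and never verify either direction of the reduction. You then explicitly flag the ``hard part'' --- the rigidity of the Dickson invariants $h_i$ and the fact that the map $H\mapsto(h_1(H),\ldots,h_d(H))$ lands on a constrained variety rather than being surjective --- as an unresolved obstruction. That obstruction is the entire content of the conjecture: without a gadget showing that realizability of a prescribed coefficient tuple by a subspace of $\langle\mathbf{g}\rangle$ is $\mathsf{NP}$-hard, nothing has been proved. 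Your first step (placing the complement problem in $\mathsf{NP}$, hence the deep-hole problem in $\mathsf{coNP}$) is correct and useful framing, but it is orthogonal to hardness. In short, you have restated the problem in the language of subspace realization and sketched where a proof would have to go, but the reduction that would make this a theorem is entirely absent --- which is consistent with the paper's own assessment that the question is open.
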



\end{document}